\documentclass[runningheads]{llncs}
\usepackage[utf8]{inputenc}

\usepackage{graphicx}
\usepackage{amsmath}
\usepackage{amssymb}
\usepackage{bbm}
\usepackage{physics}

\usepackage[hyphens]{url}
\usepackage{hyperref}
\usepackage[hyphenbreaks]{breakurl}

\usepackage[maxbibnames=20,style=numeric]{biblatex}
\addbibresource{references.bib}

\usepackage{authblk}

\usepackage{tikz}
\usetikzlibrary{positioning}

\usepackage{algorithm}
\usepackage{algpseudocode}

\usepackage{varwidth}

\newcommand{\ZZ}{\mathbb{Z}}

\newcommand{\NN}{\mathbb{N}}

\newcommand{\crsdss}{{\tt CRS-DSS}}

\newcommand{\kg}{{\tt KeyGen}}
\newcommand{\peg}{{\tt P}}
\newcommand{\vic}{{\tt V}}

\newcommand{\getsr}{\stackrel{\$}{\gets}}

\newcommand\repeatedtheorem[2]{%
  \expandafter\let\expandafter\repeatedtheoremtmp\csname#1name\endcsname
  \expandafter\def\csname#1name\endcsname{%
    \repeatedtheoremtmp\ \ref{#2}%
    \global\expandafter\let\csname#1name\endcsname\repeatedtheoremtmp
  }
}

\spnewtheorem*{theorem*}{Theorem}{\normalshape\bfseries}{\itshape}

\spnewtheorem{pro}{Protocol}{\normalshape\bfseries}{\normalshape}



\newcommand{\modd}[1]{[#1]_n}

\newcommand{\bigO}[1]{\mathcal{O}(#1)}

\title{SPDH-Sign: towards Efficient, Post-quantum Group-based Signatures}
\titlerunning{SPDHSign}
\author{}
\institute{}

\author{
  Christopher~Battarbee\inst{1}
     \and
     Delaram~Kahrobaei\inst{1,2,3,4}
    \and
     Ludovic~Perret\inst{5}
     \and
     Siamak~F.~Shahandashti\inst{1}
 }


 \institute{
     Department of Computer Science, University of York, UK
     \and
     Departments of Computer Science and Mathematics, Queens College, City University of New York, USA
     \and
     Initiative for the Theoretical Sciences, Graduate Center, City University of New York, USA
     \and
     Department of Computer Science and Engineering, Tandon School of Engineering, New York University, USA
     \and
     Sorbonne University, CNRS, LIP6, PolSys, Paris, France
 }

\authorrunning{C. Battarbee, D. Kahrobaei, L. Perret and S. F. Shahandashti}



\begin{document}
\maketitle

\begin{abstract} 
In this paper, we present a new diverse class of post-quantum group-based Digital Signature Schemes (DSS). The approach is significantly different from previous examples of group-based digital signatures and adopts the framework of group action-based cryptography: we show that each finite group defines a group action relative to the semidirect product of the group by its automorphism group, and give security bounds on the resulting signature scheme in terms of the group-theoretic computational problem known as the Semidirect Discrete Logarithm Problem (SDLP). Crucially, we make progress towards being able to efficiently compute the novel group action, and give an example of a parameterised family of groups for which the group action can be computed for any parameters, thereby negating the need for expensive offline computation or inclusion of redundancy required in other schemes of this type.


\end{abstract}

\keywords{Group-based Signature \and Post-quantum Signature \and Group Action Based Cryptography \and Post-quantum Group-based Cryptography}

\section*{Introduction}
Since the advent of Shor's algorithm and related quantum cryptanalysis, it has been a major concern to search for quantum-resistant alternatives to traditional public-key cryptosystems. The resultant field of study is known today as Post-Quantum Cryptography (PQC), and has received significant attention since the announcement of the NIST standardisation. 

One of the goals of PQC is to develop a quantum-resistant Digital Signature Scheme (DSS), a widely applicable class of cryptographic scheme providing certain authenticity guarantees. Following multiple rounds of analysis, NIST have selected three such schemes for standardisation, two of which are based on the popular algebraic notion of a lattice. Nevertheless, stressing the importance of diversity amongst the post-quantum roster, a call for efficient DSS proposals not based on lattices was issued in 2022 \cite{nist4}. A potential source of post-quantum hard computational problems come from group-based cryptography; for a comprehensive survey of the field including examples of DSSs, see the work of Kahrobaei et al in \cite{kahrobaeinotices}, \cite{Kahrobaei-BattarbeeBook}.

Recall that a finite commutative group action consists of a finite abelian group $G$, a finite set $X$, and a function mapping pairs in $G\times X$ into $G$. Another promising framework for PQC has its origins in the so-called \textit{Hard Homogenous Spaces} of Couveignes\footnote{Similar notions were arrived at independently by Rostovstev and Stolbunov \cite{rostovtsev2006public}, \cite{stolbunov2010constructing}.} \cite{couveignes2006hard}: one considers a family of group actions for which all the `reasonable' operations - for example, evaluating the group action function, and sampling uniformly from the group - can be done efficiently, but a natural analogue of the discrete logarithm problem called the Vectorisation Problem is computationally difficult. Given such a group action, one can exploit the commutativity of the group operation to construct a generalisation of the Diffie-Hellman Key Exchange protocol based on the difficulty of the Vectorisation Problem, which is believed to be post-quantum hard.

As well as this analogue of Diffie-Hellman, the group action framework is used to construct an interactive proof of identity, which is effectively a standard three-pass identification scheme. In his doctoral thesis \cite{stolbunov2012cryptographic}, Stolbunov uses this identification scheme to obtain a signature scheme by applying the standard Fiat-Shamir heuristic; we will here follow the convention of referring to this scheme as the CRS\footnote{Couveignes, Rostovstev and Stolbunov.}  Digital Signature Scheme (\crsdss{}). In order to specify a practical signature scheme it remains to specify a group action: very roughly, \crsdss{} uses the celebrated example, coming from the theory of isogenous elliptic curves, of a finite abelian group called the `class group' acting on a set of elliptic curves.

\crsdss{} did not recieve much attention for a number of years, for two key reasons: first, it was demonstrated that the scheme admits an attack of quantum subexponential complexity \cite{childs2014constructing} (in fact, this attack applies to all group-action based cryptography). This might in itself be tolerable; much more troubling is that the original version of \crsdss{} is unacceptably slow. There has, however, been a resurgence of interest in schemes similar to \crsdss{} following the discovery in \cite{castryck2018csidh} of a much faster isogeny-based group action; on the other hand, the computation of the class group is in general thought to be computationally difficult. In fact this is quite a significant problem: without random sampling the security proofs, which rely on group elements hiding secrets to have the appropriate distribution, break down. Two approaches to solving this problem have been suggested: in \cite{de2019seasign}, one uses the `Fiat-Shamir with aborts' technique developed by Lyubashevsky \cite{lyubashevsky2009fiat}, at the cost of rendering the scheme considerably less space efficient; in \cite{beullens2019csi}, a state-of-the-art computation of a class group is performed and the resulting group action is used as the platform for \crsdss{}. However, it is important to note that here the computation of \textit{a} class group is performed, and so one is restricted in terms of tweaking parameters. In particular, the introduction of new parameters would require another extremely expensive offline class group computation. 

A potential third solution is to dispense with the isogeny-based group action altogether, and search for different examples of group actions for which computing the appropriate group - and therefore uniform sampling - is efficient. Historically speaking, there has not been much research in this direction since non-trivial examples of cryptographically interesting group actions have not been available - though this work is predated by a general framework for actions by semigroups in \cite{monico2002semirings}, and an example semigroup action arising from semirings in \cite{maze2005public}. In this paper we make an important step in the search for efficient group actions; in particular we show that every finite group gives rise to a group action on which \crsdss{}-type signatures can be constructed, and that the respective group is cyclic and has order dividing a known quantity. These group actions arise from the group-theoretic notion of the semidirect product, and were first studied in the context of a generalisation of Diffie-Hellman \cite{habeeb2013public} - note, however, that it was not known at the time that the proposed framework was an example of a group action. Indeed, the link was only discovered rather recently \cite{battarbee2022subexponential}, and prompted the isogeny-style renaming of the key exchange in \cite{habeeb2013public} as \textbf{S}emidirect \textbf{P}roduct \textbf{D}iffie-\textbf{H}ellman, or SPDH (to be pronounced `spud'). With this in mind, in this paper we propose a hypothetical family of digital signature schemes which we christen {\tt SPDH-Sign}.

It is important to note that we do not provide concrete security parameters, nor do we claim a security improvement over similar schemes: instead, the paper has two key contributions. First, we notify the community of a promising step towards efficient, scalable sampling in cryptographic group actions: our Theorem~\ref{thm:n-div} shows that for each group action we construct there is quite a severe restriction on the possible sizes of the cyclic group acting. Since sampling from a cyclic group is trivial if we know its order, we have provided a large class of candidate group actions for which sampling is efficient. As such we also carry out the standard methodology of defining a resulting signature scheme, and give a security proof in the random oracle model that bounds the security of the signature scheme in terms of our central algorithmic problem in more explicit terms than comparable proofs. 

The second key contribution is the proposal of a specific group as an example of a group in which one can efficiently sample in the resulting group action whilst maintaining resistance to related (but not known equivalent) cryptanalysis. Here we see an example of our Theorem~\ref{thm:n-div} in action - the size of the crucial parameter needed for efficient sampling can be one of only 12 values, and we can check the validity of each of these values in logarithmic time.

\subsection*{Related Work}
The following is a short note to emphasise the novelty of our contribution with respect to related areas of the literature. 

The idea of defining cryptography based on the action of a semigroup on a set, and the resulting “semigroup action problem” (SAP), is proposed in Chris Monico’s thesis \cite{monico2002semirings}, and is referenced by Han and Zhuang in their recent paper \cite{han2022dlp}. Certainly this idea of a semigroup action predates our establishment of a cryptographically relevant group action arising from topics in group theory. We therefore clarify that our contribution is not the novel proposal of a group action of this type, but the explicit connection between cryptographic group actions and the problems arising from semidirect product key exchange, which originally appears in \cite{habeeb2013public}.

In \cite{han2022dlp}, SAP and the semidirect product key exchange are mentioned in the same breath in the introduction. This, however, does not constitute the explicit connection of the problem originally appearing in the discussion of semidirect product key exchange and cryptographic group actions - where this connection is one of the claimed novel aspects of our paper - but a list of problems related to the semigroup DLP. Moreover, none of the semidirect product key exchange-adjacent literature we are aware of mentions SAP, including proposals of semidirect product key exchange \cite{habeeb2013public, kahrobaei2016using, rahman2022make, rahmanmobs} and cryptanalysis of the semidirect product key exchange authored by Monico himself \cite{monico2020remark, monico2021remarks}. Accordingly, we believe that establishing the connection between semidirect product key exchange and group-action based cryptography is a novel contribution to the area.


\section{Preliminaries}
\subsection{The Semidirect Product}
The term `semidirect' product refers, generally speaking, to a rather deep family of notions describing the structure of one group with respect to two other groups. For our purposes we are interested in a rather specific case of the semidirect product, defined as follows:

\begin{definition}
    Let $G$ be a finite group and $Aut(G)$ its automorphism group. Suppose that the set $G\times Aut(G)$ is endowed with the following operation:
    \[(g,\phi)(g',\phi')=(\phi'(g)g',\phi'\phi)\]
    where the multiplication is that of the underlying group $G$, and the automorphism $\phi'\phi$ is the automorphism obtained by first applying $\phi$, and then $\phi'$. We denote this group $G\ltimes Aut(G)$.
\end{definition}

A few facts about this construction are standard.
\begin{proposition}
Let $G$ be a finite group and $\Phi\leq Aut(G)$ (where $\Phi$ can be any subgroup, including $Aut(G)$ itself). One has the following: 
\begin{enumerate}
    \item $G\ltimes\Phi$ is a finite group of size $|G||\Phi|$
    \item Let $(g,\phi)\in G\ltimes\Phi$. One has 
    \[(g,\phi)^{-1}=(\phi^{-1}(g^{-1}),\phi^{-1})\]
\end{enumerate}
\end{proposition}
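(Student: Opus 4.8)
The plan is to verify both assertions by direct computation from the defining operation $(g,\phi)(g',\phi')=(\phi'(g)g',\phi'\phi)$, treating each item in turn.

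For the first item, I would first argue that $G \ltimes \Phi$ is a group. Closure is immediate since $\phi'(g)g' \in G$ and $\phi'\phi \in \Phi$ (using that $\Phi$ is closed under composition). For associativity I would expand $((g,\phi)(g',\phi'))(g'',\phi'')$ and $(g,\phi)((g',\phi')(g'',\phi''))$ separately: the first component of each works out to $\phi''(\phi'(g)g')g'' = \phi''(\phi'(g))\phi''(g')g''$ (using that $\phi''$ is a homomorphism) versus $\phi''\phi'(g)\,(\phi''(g')g'')$, and these agree; the second components are both $\phi''\phi'\phi$ since composition of automorphisms is associative. The identity element is $(e_G, \mathrm{id})$, as one checks on both sides. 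Existence of inverses will follow once item 2 is established (or can be noted in passing). Finally, the cardinality claim $|G \ltimes \Phi| = |G||\Phi|$ is just the size of the underlying Cartesian product $G \times \Phi$, since as a set $G \ltimes \Phi$ is exactly this product; finiteness follows from finiteness of $G$ and $\mathrm{Aut}(G)$.

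For the second item, the strategy is simply to multiply the claimed inverse against $(g,\phi)$ on both sides and check we obtain $(e_G, \mathrm{id})$. Computing $(g,\phi)\,(\phi^{-1}(g^{-1}),\phi^{-1})$ gives first component $\phi^{-1}(g)\,\phi^{-1}(g^{-1}) = \phi^{-1}(g g^{-1}) = \phi^{-1}(e_G) = e_G$ and second component $\phi^{-1}\phi = \mathrm{id}$. Computing $(\phi^{-1}(g^{-1}),\phi^{-1})\,(g,\phi)$ gives first component $\phi(\phi^{-1}(g^{-1}))\,g = g^{-1}g = e_G$ and second component $\phi\phi^{-1} = \mathrm{id}$. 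Both products yield the identity, so the formula is correct.

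I do not anticipate a genuine obstacle here: the whole proposition is a routine unwinding of definitions, and the only points requiring any care are (a) remembering the composition-order convention ``first $\phi$, then $\phi'$'' when tracking which automorphism is applied, and (b) using that each $\phi \in \Phi$ is a homomorphism at the step $\phi''(\phi'(g)g') = \phi''(\phi'(g))\phi''(g')$ in the associativity check. Since the paper labels these ``standard,'' I would present the verification compactly, perhaps spelling out only the associativity computation and the inverse check in full.
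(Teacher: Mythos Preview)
Your proposal is correct and is exactly the standard verification one would expect. The paper, however, does not supply a proof at all: it introduces the proposition with the phrase ``A few facts about this construction are standard'' and then simply states the two items without argument. So there is nothing to compare against; your direct computation from the defining operation is the canonical way to justify these claims, and the paper implicitly relies on the reader performing (or already knowing) precisely what you wrote.
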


\subsection{Proofs of Knowledge and Identification Schemes}\label{sec:id-schemes}

Roughly speaking, the idea of the Fiat-Shamir class of signatures is as follows: we interactively convince an `honest' party that we possess a certain secret. We can then transform this interactive paradigm to a non-interactive digital signature scheme by applying the Fiat-Shamir transform. A primary motivation for this approach is that the resulting signature scheme inherits its security at rather low cost from security properties of the underlying interactive scheme - as such, it is necessary for us now to review some of these security notions.

First, let us define exactly what we mean by these interactive proof of knowledge protocols. The idea of communicating a `secret' is neatly captured by the notion of a binary relation; that is, for two sets $\mathcal{W}$ and $\mathcal{S}$, consider a set $\mathcal{R}\subset\mathcal{W}\times\mathcal{S}$. Given a pair $(w,s)\in\mathcal{R}$, we say $s$ is the \textit{statement} and $w$ is the \textit{witness}. In general, for a given statement a party called the `prover' wishes to demonstrate their knowledge of a valid witness (that is, given $s$ we wish to prove that we possess a $w$ such that $(w,s)\in\mathcal{R}$) to a party called the \textit{verifier}. Of course, one can do this trivially by simply revealing the witness, so we add the crucial requirement that \textit{no information about the witness is revealed}.

We refer more or less to this idea when discussing identification schemes, with the caveat that the prover should be able to compute an arbitrary pair of the binary relation. If the prover cannot generate an an arbitrary pair of the binary relation, and instead is to demonstrate his knowledge of some given element of the binary relation, we have instead a `zero-knowledge proof'. A notable class of zero-knowledge proofs are the so-called `sigma protocols'. One can always turn a zero-knowledge proof into an identification scheme by providing the prover with an algorithm capable of generating an arbitrary pair of the binary relation; our definition of identification schemes in fact refers only to those arrived at by transforming a sigma protocol into an identification scheme.

Notice that the idea of a binary relation serves as a neat generalisation of the usual notion of a public and private key pair. The algorithm used by the identification scheme to generate binary relation instances is therefore denoted by {\tt KeyGen}, and produces a pair $(sk,pk)$. We also require, in some sense to be made precise later, that recovering an appropriate witness from a statement is computationally difficult.

\begin{definition}[Identification Scheme]
    Let $\mathcal{R}\subset\mathcal{S}\times\mathcal{P}$ be a binary relation. An identification scheme is a triple of algorithms (\kg,\peg,\vic), where
    \begin{itemize}
        \item \kg{} takes as input a security parameter $n$ and generates a pair $(sk,pk)\in\mathcal{R}$, publishes $pk$, and passes $sk$ to \peg{}
        \item \peg{} is an interactive algorithm initialised with a pair $(sk,pk)\in\mathcal{R}$
        \item \vic{} is an interactive algorithm initialised with a statement $pk\in\mathcal{P}$. After the interaction, \vic{} outputs a decision `Accept' or `Reject'.
    \end{itemize}

    The interaction of \peg{} and \vic{} runs as follows:
    \begin{enumerate}
        \item \peg{} generates a random `commitment' $I$ from the space of all possible commitments $\mathcal{I}$ and sends it to \vic{}
        \item Upon receipt of $I$, \vic{} chooses a `challenge' $c$ from the space of all possible challenges $\mathcal{C}$ at random and sends it to \peg{}
        \item \peg{} responds with a `response' $p$
        \item \vic{} calculates an `Accept' or `Reject' response as a function of $(I,c,p)$ and the statement $pk$.
    \end{enumerate}
    The interaction of \peg{} and \vic{} is depicted in Figure~\ref{fig:id-scheme}.
\end{definition}

    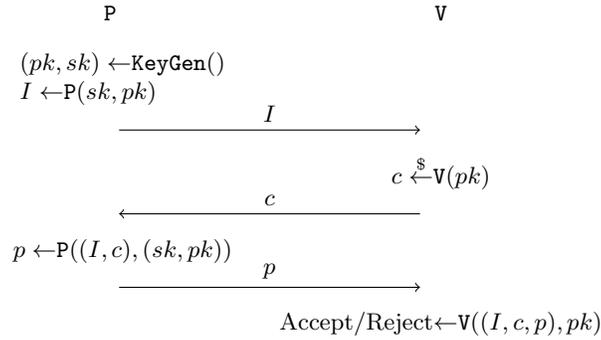
\begin{figure}
    \begin{center}
        \begin{tikzpicture}
            \node (peg) {\peg{}};
            \node (vic) [right=4cm of peg] {\vic{}};
            \node (pick_commit) [below= 0.2cm of peg] {
            \begin{varwidth}{\linewidth}
            \begin{algorithmic}
                \State $(pk,sk)\gets$\kg{}()
                \State $I\gets$\peg{}$(sk,pk)$
            \end{algorithmic}
            \end{varwidth}
            };
            \node (peg_commit) [below= 0.1cm of pick_commit] {};
            \node (vic_commit) [right=4cm of peg_commit] {};
            \draw[->] (peg_commit.east) -- (vic_commit.west) node[midway,above] {$I$};
            \node (vic_select) [below= 0.1cm of vic_commit] {
            \begin{varwidth}{\linewidth}
            \begin{algorithmic}
                \State $c\getsr$\vic{}$(pk)$
            \end{algorithmic}
            \end{varwidth}
            };
            \node (vic_request) [below=0.1cm of vic_select] {};
            \node (peg_request) [left=4cm of vic_request] {};
            \draw[<-] (peg_request.east) -- (vic_request.west) node[midway,above] {$c$};
            \node (peg_calculate) [below = 0.1cm of peg_request] {
            \begin{varwidth}{\linewidth}
            \begin{algorithmic}
                \State $p\gets$\peg$((I,c),(sk,pk))$
            \end{algorithmic}
            \end{varwidth}
            };
            \node (peg_confirm) [below=0.1cm of peg_calculate] {};
            \node (vic_confirm) [right=4cm of peg_confirm] {};
            \draw[->] (peg_confirm.east) -- (vic_confirm.west) node[midway,above] {$p$};
            \node (vic_verify) [below=0.1cm of vic_confirm] {
            \begin{varwidth}{\linewidth}
            \begin{algorithmic}
                \State Accept/Reject$\gets$\vic{}$((I,c,p),pk)$
            \end{algorithmic}
                
            \end{varwidth}
            };
        \end{tikzpicture}
    \end{center}
        
    \caption{An identification scheme.}
    \label{fig:id-scheme}
    \end{figure}

\begin{definition}
    Let (\kg{},\peg{},\vic{}) be an identification scheme. The triple $(I,c,p)$ of exchanged values between \peg{} and \vic{} is called a `transcript'; if a prover (resp. verifier) generates $I,p$ (resp $c$) with the algorithm \peg{} (resp. \vic{}), they are called `honest'. An identification scheme is `complete' if a transcript generated by two honest parties is always accepted by the verifier.
\end{definition}

Turning our attention to the security of identification protocols, let us define the framework we wish to work with. As we will see later, it suffices for signature security to only consider identification schemes for which we have an honest verifier - in other words, it suffices to consider only a cheating prover. Let us do so in the form of the following attack games, which are \cite[Attack~Game~18.1]{boneh2020graduate} and \cite[Attack~Game~18.2]{boneh2020graduate} respectively.

\begin{definition}[Direct Attack Game]
    Let {\tt ID}=(\kg{},\peg{},\vic{}) be an identification scheme and $\mathcal{A}$ be an adversary. Consider the following game:
    \begin{enumerate}
        \item The challenger obtains $(sk,pk)\leftarrow$\kg{} and passes $pk$ to $\mathcal{A}$.
        \item The adversary interacts with the challenger who generates responses with \vic{}. At the end, the challenger outputs `Accept/Reject' as a function of the generated transcript and $pk$; the adversary wins the game if \vic{} outputs `Accept'.
    \end{enumerate}
    The Direct Attack game is depicted in Figure~\ref{fig:dir-att}. We denote the advantage of the adversary in this game with {\tt ID} as the challenger by {\tt dir-adv}($\mathcal{A}$,{\tt ID}).  
\end{definition}

\begin{figure}
    \centering
    \begin{tikzpicture}
        \node[draw,inner sep=1.5cm] (a) {$\mathcal{A}$};
        \node at (-6.5,1.2) (com) {};
        \node at (-6.6,2) (kg) {$(sk,pk)\leftarrow \kg{}()$};

        \node at (0,2) (connecter) {};
        \node at (0,1.5) (pk_r) {};

        \draw[->] (kg.east) .. controls +(right:7mm) and +(up:7mm) .. (pk_r) node[midway,above] {$pk$};
                
        \draw[<-] (com) -- (-1.5,1.2) node[midway,above] {$I^*$};
        \node at (-6.5,0) (ch) {$c\leftarrow${\tt V}$(pk)$};
        \draw[->] (ch) -- (-1.5,0) node[midway,above] {$c$};
        \node at (-6.5,-1.2) (pf) {$d\leftarrow${\tt V}$((I^*,c,p^*),pk)$};
        \draw[<-] (pf) -- (-1.5, -1.2) node[midway,above] {$p^*$};
        \draw (-5, -1.75) rectangle (-8.25, 2.25);

        \node[below =6mm of pf] (dec) {$d$};
        \draw[->] (pf) -- (dec);
        
    \end{tikzpicture}
    \caption{The direct attack game.}
    \label{fig:dir-att}

\end{figure}
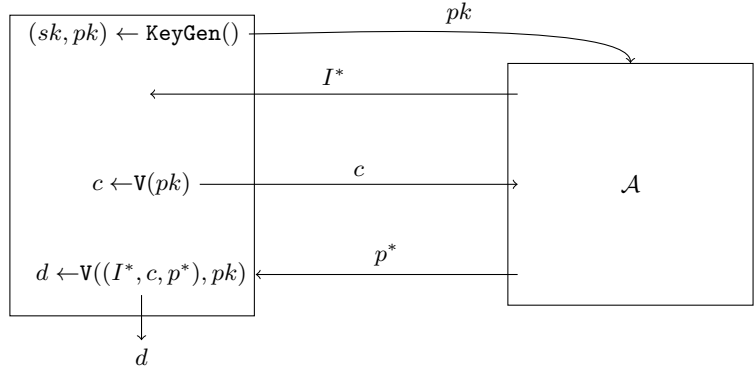

\begin{definition}[Eavesdropping Attack]
    Let {\tt ID}=(\kg{},\peg{},\vic{}) be an identification scheme and $\mathcal{A}$ be an adversary. Consider the following game:
    \begin{enumerate}
        \item The challenger obtains $(sk,pk)\leftarrow$\kg{} and passes $pk$ to $\mathcal{A}$.
        \item The adversary enters into an `eavesdropping' phase, whereby they can request honestly-generated transcripts from a transcript oracle $\mathcal{T}$ possessing the same $(sk,pk)$ pair generated in the previous step.
        \item The adversary interacts with the challenger who generates responses with \vic{}. At the end, the challenger outputs `Accept/Reject' as a function of the generated transcript and $pk$; the adversary wins the game if \vic{} outputs `Accept'.
    \end{enumerate}
    The Eavesdropping Attack game is depicted in Figure~\ref{fig:eav-att}. We denote the advantage of the adversary in this game with {\tt ID} as the challenger by {\tt eav-adv}($\mathcal{A}$,{\tt ID}).
\end{definition}

\begin{figure}
    \centering
    \begin{tikzpicture}
        \node[draw,inner sep=1.5cm] (a) {$\mathcal{A}$};
        \node at (-5.5,1.2) (com) {};
        \node at (-5.6,2) (kg) {$(sk,pk)\leftarrow \kg{}()$};

        \node at (0,2) (connecter) {};
        \node at (0,1.5) (pk_r) {};

        \draw[->] (kg.east) .. controls +(right:7mm) and +(up:7mm) .. (pk_r) node[midway,above] {$pk$};
                
        \draw[<-] (com) -- (-1.5,1.2) node[midway,above] {$I^*$};
        \node at (-5.5,0) (ch) {$c\leftarrow${\tt V}$(pk)$};
        \draw[->] (ch) -- (-1.5,0) node[midway,above] {$c$};
        \node at (-5.5,-1.2) (pf) {$d\leftarrow${\tt V}$((I^*,c,p^*),pk)$};
        \draw[<-] (pf) -- (-1.5, -1.2) node[midway,above] {$p^*$};
        \draw (-4, -1.75) rectangle (-7.25, 2.25);

        \node[below =6mm of pf] (dec) {$d$};
        \draw[->] (pf) -- (dec);

        \node[draw, inner sep=0.75cm] at (4, 0) (or) {$\mathcal{T}$};
        \draw[->] (kg.east) .. controls +(right:20mm) and +(up:16mm) .. (4,0.75) node[midway,above] {$(sk,pk)$};

        \draw[->] (1.5,0.6) -- (3.25,0.6) node[midway,above]{request};
        \draw[<-] (1.5,-0.6) -- (3.25,-0.6) node[midway,above]{$(\hat{I},\hat{c},\hat{p})$};
        
    \end{tikzpicture}
    \caption{The eavesdropping attack game.}
    \label{fig:eav-att}
\end{figure}
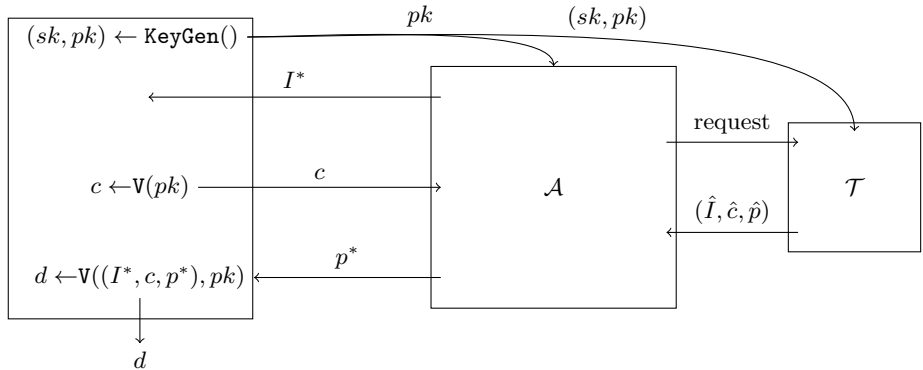

In practice, given a concrete identification scheme it is possible to bound the advantage of an adversary in these games provided one can prove the following two properties hold for the identification scheme:

\begin{definition}\label{def:knowledge}
    Let (\kg{},\peg{},\vic{}) be an identification scheme.
    \begin{itemize}
        \item The scheme has `special soundness' if two transcripts with the same commitment and different challenges allow recovery of the witness $sk$; that is, if $(I,c,p),(I,c^*,p^*)$ are two transcripts generated with $(sk,pk)\leftarrow$\kg, there is an efficient algorithm taking these transcripts as input that returns $sk$.
        \item The scheme has `special honest verifier zero knowledge' if, given a statement $pk$ and a challenge $c$, there is an efficient algorithm to generate a passing transcript $(I^*,c,p^*)$ with the same distribution as a legitimately generated transcript.
    \end{itemize}
\end{definition}

Before moving on there is one final security notion to explore. Notice that if the underlying binary relation of an identification scheme is such that one can easily recover a valid witness from the public statement, an adversary can easily succeed in either of the above games simply by honestly generating the proof $p$ with the appropriate value of $sk$. We have loosely discussed the notion that recovering a witness should therefore be difficult; it is nevertheless so far not clear how precisely this difficulty is accounted for. In fact, there are a number of ways to get round this. For our purposes, and in our application of the Fiat-Shamir transform, we will invoke the system outlined in \cite[Section~19.6]{boneh2020graduate}. The idea is basically thus: provided the properties in Definition~\ref{def:knowledge} hold, it is possible to set up the security proof such that all the difficulty of recovering a witness is `priced in' to the key generation algorithm. Again, we will need a precise definition to make this rigorous later on: the following is \cite[Attack~Game~19.2]{boneh2020graduate}

\begin{definition}[Inversion Attack Game]
    Let \kg{} be a key generation algorithm for a binary relation $\mathcal{R}\subset\mathcal{S}\times\mathcal{P}$ and $\mathcal{A}$ be an adversary. Consider the following game:
    \begin{enumerate}
        \item A pair $(sk,pk)$ is generated by running \kg{}, and the value $pk$ is passed to the adversary $\mathcal{A}$.
        \item $\mathcal{A}$ outputs some $\hat{sk}\in\mathcal{S}$. The adversary wins if $(\hat{sk},pk)\in\mathcal{R}$.
    \end{enumerate}
    We denote the advantage of the adversary in this game with {\tt kg} as the challenger by {\tt inv-adv}($\mathcal{A}$,{\tt kg}).
\end{definition}

\subsection{Signature Schemes}
Recall that a `signature scheme' is a triple of algorithms ({\tt KeyGen, Sg, Vf}), where {\tt KeyGen()} outputs a private-public key pair $(sk,pk)$ upon input of a security parameter. For some space of messages $\mathcal{M}$, {\tt Sg} takes as input $sk$ and some $m\in\mathcal{M}$, producing a `signature' $\sigma$. {\tt Vf} takes as input $pk$ and a pair $(m,\sigma)$, and outputs either `Accept' or `Reject'. We have the obvious correctness requirement that for a key pair $(sk,pk)$ generated by \kg{} we can expect, for any $m\in\mathcal{M}$, that one has  
\[\text{{\tt Vf}}(pk,(m,\text{{\tt Sg}}(sk,m)))=\text{Accept}\]

The security of a signature scheme is defined with respect to the following attack game, which is \cite[Attack Game~13.1]{boneh2020graduate} (but is widely available). 

\begin{definition}[Chosen Message Attack]
    Let {\tt S}=(\kg{},{\tt Sg},{\tt Vf}) be a signature scheme and $\mathcal{A}$ be an adversary. Consider the following game:
    \begin{enumerate}
        \item The challenger obtains $(sk,pk)\leftarrow$\kg{} and passes $pk$ to $\mathcal{A}$.
        \item The adversary enters into an `querying' phase, whereby they can obtain signatures $\sigma_i={\tt Sg}(sk,m_i)$ from the challenger, for the adversary's choice of message $m_i$. The total number of messages queried is denoted $Q$.
        \item The adversary submits their attempted forgery - a message-signature pair $(m^*,\sigma^*)$ - to the challenger. The challenger outputs {\tt Vf}$(pk,(m^*,\sigma^*))$; the adversary wins if this output is `Accept'.
    \end{enumerate}
    The Chosen Message Attack game is depicted in Figure~\ref{fig:cma-att}. We denote the advantage of the adversary in this game with {\tt S} as the challenger by {\tt cma-adv}($\mathcal{A}$,{\tt S}).
\end{definition}

\begin{figure}
    \centering
    \begin{tikzpicture}
        \node[draw,inner sep=1.5cm] (a) {$\mathcal{A}$};
        \node at (-6.5,0.6) (com) {};
        \node at (-6.6,2) (kg) {$(sk,pk)\leftarrow${\tt KeyGen}};

        \node at (-6.5,1) () {for $1\leq i\leq Q$};
        
        \node at (0,2) (connecter) {};
        \node at (0,1.5) (pk_r) {};

        \draw[->] (kg.east) .. controls +(right:7mm) and +(up:7mm) .. (pk_r) node[midway,above] {$pk$};
                
        \draw[<-] (com) -- (-1.5,0.6) node[midway,above] {$m_i$};
        \node at (-6.5,0) (ch) {$\sigma_i\leftarrow${\tt Sg}$(sk,m_i)$};
        \draw[->] (ch) -- (-1.5,0) node[midway,above] {$\sigma_i$};
        \node at (-6.5,-1.2) (pf) {$d\leftarrow${\tt Vf}$(pk,(m^*,\sigma^*))$};
        \draw[<-] (pf) -- (-1.5, -1.2) node[midway,above] {$(m^*,\sigma^*)$};
        \draw (-5, -1.75) rectangle (-8.25, 2.25);

        \node[below =6mm of pf] (dec) {$d$};
        \draw[->] (pf) -- (dec);
        
    \end{tikzpicture}
    \caption{The chosen message attack game.}
    \label{fig:cma-att}
\end{figure}
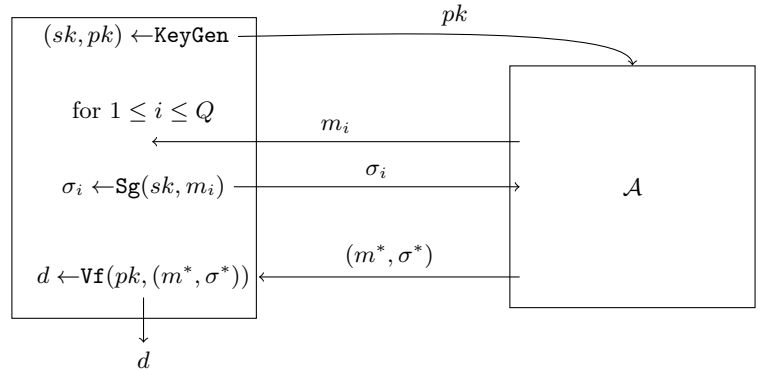

A signature scheme {\tt S} for which {\tt cma-adv}$(\mathcal{A},${\tt S}$)$ is bounded favourably\footnote{`Favourably' here usually means as a negligible function of a security parameter.} from above for any efficient adversary $\mathcal{A}$ is sometimes called {\tt euf-cma} secure, or `existentially unforgeable under chosen message attacks'.

It remains to briefly define the well-known notion of the Fiat-Shamir transform, initially presented in \cite{fiat1987prove}:
\begin{definition}[Fiat-Shamir]
    Let {\tt ID}=(\kg{},\peg{},\vic{}) be an identity scheme with commitment space $\mathcal{I}$ and $\mathcal{C}$. We define a signature scheme FS({\tt ID})=({\tt KeyGen},{\tt Sg},{\tt Vf}) on the message space $\mathcal{M}$ given access to a public function $H:\mathcal{M}\times\mathcal{I}\to\mathcal{C}$:
    \begin{enumerate}
        \item \kg{} is exactly the key generation algorithm of {\tt ID} and outputs a pair $(sk,pk)$, where $pk$ is made public
        \item {\tt Sg} takes as input $m\in\mathcal{M}$ and the key pair $(pk,sk)$ and outputs a signature $(\sigma_1,\sigma_2)$:
        \begin{algorithmic}
            \State $I\gets${\tt P}$((sk,pk))$
            \State $c\gets H(m,I)$
            \State $p\gets${\tt P}$((I,c),(sk,pk))$
            \State $(\sigma_1,\sigma_2)\gets(I,p)$
            \State \Return $(\sigma_1,\sigma_2)$
        \end{algorithmic}
        \item {\tt Vf} takes as input a message-signature pair $(m,(\sigma_1,\sigma_2))$ and outputs a decision $d$, which is `Accept' or `Reject':
        \begin{algorithmic}
            \State $c\gets H(I,\sigma_1)$
            \State $d\gets${\tt V}$((\sigma_1,c,\sigma_2),pk)$
            \State \Return $d$
        \end{algorithmic}
    \end{enumerate}
\end{definition}

Intuitively, we can see that {\tt Sg} is simulating an interactive protocol non-interactively with a call to the function $H$; in order to inherit the security properties of the identification scheme, this function $H$ should have randomly distributed outputs on fresh queries and should be computationally binding - that is, it should be difficult to find a value $I'\neq I$ such that $H(m,I)=H(m,I')$; and given a commitment $c\in\mathcal{C}$ it should be difficult to find a message $m$ and commitment $I\in\mathcal{I}$ such that $H(m,I)=c$. On the other hand, for correctness we need $H$ to be deterministic on previously queried inputs. Such a function is modelled by a hash function under the random oracle model: in this model, it was famously demonstrated in \cite{abdalla2002identification} that a relatively modest security notion for the underlying identification scheme gives strong security proofs for the resulting signature scheme. In our own security proof we use the slightly more textbook exposition presented in \cite{boneh2020graduate}.

\section{A Novel Connection to a Group Action}
Our first task is to demonstrate the existence of the claimed group action, for any finite group. A very similar structure was outlined in \cite{battarbee2022subexponential} - with the important distinction that \textit{semi}groups are insisted upon. Indeed, it turns out that allowing invertibility changes the structure in a way that we shall outline below.

\begin{definition}
        Let $G$ be a finite group, and $\Phi\leq Aut(G)$. Fix some $(g,\phi)\in G\ltimes \Phi$. For any $x\in\ZZ$, the function $s_{g,\phi}:\ZZ\to G$ is defined as the group element such that 
    \[(g,\phi)^x=(s_{g,\phi}(x),\phi^x)\]
\end{definition}

The group action of interest arises from the study of the set $\{s_{g,\phi}(i):i\in\ZZ\}$. Certainly $1\in\{s_{g,\phi}(i):i\in\ZZ\}$, since there is some $n\in\NN$ such that $(s_{g,\phi}(n),\phi^n)=(g,\phi)^n=(1,id)$, but one cannot immediately deduce that this is the smallest integer for which $s_{g,\phi}$ is $1$. Indeed, even if the order $n$ of $(g,\phi)$ is the smallest integer such that $s_{g,\phi}(n)=1$, we are not necessarily guaranteed that every integer up to $n$ is mapped to a distinct elements of $G$ by $s_{g,\phi}$. Before resolving these questions let us introduce some terminology.

\begin{definition}
    Let $G$ be a finite group, and $\Phi\leq Aut(G)$. Fix some $(g,\phi)\in G\ltimes \Phi$. The set 
    \[\mathcal{X}_{g,\phi}:=\{s_{g,\phi}(i):i\in\ZZ\}\]
    is called the \textit{cycle} of $(g,\phi)$, and its size is called the \textit{period} of $(g,\phi)$.
\end{definition}

In the interest of brevity we will also assume henceforth that by $(g,\phi)$ we mean some pair occurring in a semidirect product group as described above. For any such pair $(g,\phi)$, note that $\mathcal{X}_{g,\phi}$ is not necessarily closed under the group operation - we can, nevertheless, implement addition in the argument of $s_{g,\phi}$ as follows:
\begin{theorem}\label{thm:addition-in-exps}
    Let $i,j\in\ZZ$ and suppose $(g,\phi)\in G\ltimes \Phi$ in the usual way. One has that
    \[\phi^j(s_{g,\phi}(i))s_{g,\phi}(j)=s_{g,\phi}(i+j)\]
\end{theorem}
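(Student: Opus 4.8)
The plan is to prove this by a direct computation inside the group $G \ltimes \Phi$, using nothing more than the defining property of $s_{g,\phi}$ together with the fact that exponents add in any group. First I would recall that, by definition, $(g,\phi)^x = (s_{g,\phi}(x), \phi^x)$ for every $x \in \ZZ$; in particular $(g,\phi)^i = (s_{g,\phi}(i), \phi^i)$ and $(g,\phi)^j = (s_{g,\phi}(j), \phi^j)$, while $(g,\phi)^{i+j} = (s_{g,\phi}(i+j), \phi^{i+j})$.

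Next I would expand the identity $(g,\phi)^{i+j} = (g,\phi)^i (g,\phi)^j$ using the semidirect-product operation $(a,\alpha)(b,\beta) = (\beta(a)b, \beta\alpha)$ with $a = s_{g,\phi}(i)$, $\alpha = \phi^i$, $b = s_{g,\phi}(j)$, $\beta = \phi^j$. This gives $(g,\phi)^{i+j} = \big(\phi^j(s_{g,\phi}(i))\, s_{g,\phi}(j),\ \phi^j\phi^i\big)$. Comparing the first coordinate of this expression with the first coordinate of $(s_{g,\phi}(i+j), \phi^{i+j})$ yields exactly the claimed equality, and the second coordinates agree trivially since $\phi^j\phi^i = \phi^{i+j}$.

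The only points needing care — and the reason the statement is not entirely vacuous — are bookkeeping ones: the multiplication in $G$ is in general non-commutative, so one must keep $\phi^j(s_{g,\phi}(i))$ on the left and $s_{g,\phi}(j)$ on the right precisely as the semidirect-product rule dictates, and one must apply the paper's convention for composing automorphisms (first $\phi$, then $\phi'$) consistently. For negative $i$ or $j$ nothing extra is required, since $(g,\phi)^x$ is defined for all $x \in \ZZ$ and $(g,\phi)^{i+j} = (g,\phi)^i(g,\phi)^j$ holds in any group; alternatively one notes that $s_{g,\phi}$ is well-defined on all of $\ZZ$ because the second coordinate of $(g,\phi)^x$ is forced to equal $\phi^x$ (e.g.\ via the inverse formula in the preceding Proposition). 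I expect no genuine obstacle: the whole content of the theorem is that this one-line computation repackages ``addition of exponents'' into a workable closed form for $s_{g,\phi}$, which is what later constructions will need.
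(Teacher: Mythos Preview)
Your proposal is correct and follows exactly the same approach as the paper: expand $(g,\phi)^{i+j}=(g,\phi)^i(g,\phi)^j$ via the semidirect-product multiplication rule and compare first coordinates. The extra remarks you make about non-commutativity, the composition convention, and negative exponents are sound but not needed beyond what the paper already assumes.
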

\begin{proof}
    Following the definitions one has 
    \begin{align*}
        (s_{g,\phi}(i+j),\phi^{i+j}) &= (g,\phi)^{i+j} \\
        &= (g,\phi)^i(g,\phi)^j \\
        &= (s_{g,\phi}(i),\phi^i)(s_{g,\phi}(j),\phi^j) \\
        &= (\phi^j(s_{g,\phi}(i))s_{g,\phi}(j),\phi^{i+j})
    \end{align*}    
\qed
\end{proof}

Put another way, we can use integers to map $\mathcal{X}_{g,\phi}$ to itself. This idea is sufficiently important to earn its own notation:
\begin{definition}\label{def:step}
    Let $i\in\ZZ$. The function $\ast:\ZZ\times \mathcal{X}_{g,\phi}\to \mathcal{X}_{g,\phi}$ is given by 
    \[i\ast s_{g,\phi}(j):=\phi^j(s_{g,\phi}(i))s_{g,\phi}(j)\]
\end{definition}

We have seen that $i\ast s_{g,\phi}(j)=s_{g,\phi}(i+j)$; accordingly, we pronounce the $\ast$ symbol as `step'. An immediate consequence is the presence of some degree of `looping' behaviour; that is, supposing $s_{g,\phi}(n)=1$ for some $n\in\ZZ$, one has 
\begin{align*}
    s_{g,\phi}(n+1)=1\ast s_{g,\phi}(n) &= 1\ast 1 \\
    &= \phi(1)s_{g,\phi}(1) \\
    &= s_{g,\phi}(1)
\end{align*}

Generalising this idea we get a more complete picture of the structure of the cycle.

\begin{theorem}\label{cycle-structure}
    Let $G$ be a finite group and $\Phi\leq Aut(G)$ an automorphism subgroup. Fix $(g,\phi)\in G\ltimes Aut(G)$, and let $n$ be the smallest positive integer for which $s_{g,\phi}(n)=1$. One has that $|\mathcal{X}_{g,\phi}|=n$, and
    \[\mathcal{X}_{g,\phi}=\{1,g,...,s_{g,\phi}(n-1)\}\]
\end{theorem}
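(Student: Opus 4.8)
The plan is to prove the two assertions separately: first that every value of $s_{g,\phi}$ already occurs among $s_{g,\phi}(0),\dots,s_{g,\phi}(n-1)$ (so that $\mathcal{X}_{g,\phi}$ is the claimed set), and then that these $n$ values are pairwise distinct (so that its size is $n$). Both steps are one-line applications of Theorem~\ref{thm:addition-in-exps}, but some care is needed because the twisted-addition identity $\phi^j(s_{g,\phi}(i))s_{g,\phi}(j)=s_{g,\phi}(i+j)$ is \emph{not} symmetric in $i$ and $j$, so specialising $i\mapsto n$ and specialising $j\mapsto n$ do quite different things, and only one is useful at each stage.

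\textbf{Step 1 (periodicity).} I would apply Theorem~\ref{thm:addition-in-exps} with $i=n$ and $j$ arbitrary. Since $s_{g,\phi}(n)=1$ and any automorphism fixes $1$, the factor $\phi^j(s_{g,\phi}(n))$ collapses to $1$, leaving $s_{g,\phi}(n+j)=s_{g,\phi}(j)$ for all $j\in\ZZ$. (Putting $j=n$ instead would leave a stray factor $\phi^n(s_{g,\phi}(i))$, which need not equal $s_{g,\phi}(i)$ since $\phi^n$ need not be the identity — this is exactly why the order of substitution matters.) Iterating, $s_{g,\phi}$ is $n$-periodic, so by the division algorithm any $s_{g,\phi}(i)$ equals $s_{g,\phi}(r)$ where $r$ is the remainder of $i$ modulo $n$. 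Hence $\mathcal{X}_{g,\phi}=\{s_{g,\phi}(0),\dots,s_{g,\phi}(n-1)\}$; since $(g,\phi)^0=(1,id)$ gives $s_{g,\phi}(0)=1$ and $(g,\phi)^1=(g,\phi)$ gives $s_{g,\phi}(1)=g$, this set is precisely $\{1,g,\dots,s_{g,\phi}(n-1)\}$.

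\textbf{Step 2 (distinctness).} Suppose $s_{g,\phi}(a)=s_{g,\phi}(b)$ with $0\le a<b\le n-1$. Now I would invoke Theorem~\ref{thm:addition-in-exps} the other way round, with $i=b-a$ and $j=a$, obtaining $\phi^a(s_{g,\phi}(b-a))\,s_{g,\phi}(a)=s_{g,\phi}(b)=s_{g,\phi}(a)$. Cancelling $s_{g,\phi}(a)$ on the right yields $\phi^a(s_{g,\phi}(b-a))=1$, and since $\phi^a$ is an automorphism of $G$ — in particular injective — this forces $s_{g,\phi}(b-a)=1$ with $0<b-a<n$, contradicting the minimality of $n$. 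Therefore the $n$ listed elements are distinct and $|\mathcal{X}_{g,\phi}|=n$.

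The only genuine subtlety, and where I would expect a careless argument to slip, is this choice of which argument of the twisted-addition law to specialise: specialising the wrong slot either leaves an automorphism twist that cannot be removed (Step 1) or produces no cancellation at all (Step 2). Once the substitutions are chosen correctly, the proof is essentially the classical computation that the order of an element equals the size of the cyclic subgroup it generates, with injectivity of $\phi^a$ playing the role that cancellation plays in an ordinary group.
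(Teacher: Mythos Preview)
Your proof is correct and follows essentially the same approach as the paper: both arguments establish periodicity and distinctness via the twisted-addition identity of Theorem~\ref{thm:addition-in-exps}, cancelling to force $s_{g,\phi}(b-a)=1$ and appealing to minimality of $n$. The only cosmetic difference is that the paper proves distinctness first and periodicity second, and phrases things through the $\ast$ notation rather than invoking Theorem~\ref{thm:addition-in-exps} directly; your explicit choice of the slot $i=n$ (rather than $j=n$) in Step~1 is in fact a bit cleaner, since it avoids any tacit reliance on $\phi^n$ being trivial.
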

\begin{proof}
    First, let us demonstrate that the values $1=s_{g,\phi}(0),s_{g,\phi}(1),...,s_{g,\phi}(n-1)$ are all distinct. Suppose to the contrary that there exists $0\leq i<j\leq n-1$ such that $s_{g,\phi}(i)=s_{g,\phi}(j)$; then some positive $k< n$ must be such that $i+k=j$. In other words:
    \begin{align*}
        i\ast s_{g,\phi}(k)=s_{g,\phi}(j) &\Rightarrow \phi^i(s_{g,\phi}(k))s_{g,\phi}(i)=s_{g,\phi}(j) \\
        &\Rightarrow \phi^i(s_{g,\phi}(k)) =1 \\
        &\Rightarrow s_{g,\phi}(k) = 1
    \end{align*}
    which is a contradiction, since $k<n$. It remains to show that every integer is mapped by $s_{g,\phi}$ to one of these $n$ distinct values - but this is trivial, since we can write any integer $i$ as $kn+j$ for some integer $k$ and $0\leq j<n$. It follows that 
    \[s_{g,\phi}(i)=s_{g,\phi}(j)\]
    where $s_{g,\phi}(j)$ is one of the $n$ distinct values.
    \qed
\end{proof}

It follows that we can write $i\ast s_{g,\phi}(j)=s_{g,\phi}(i+j\mod n)$. In fact, the latter part of the above argument demonstrates something slightly stronger: not only is every integer mapped to one of $n$ distinct values by $s_{g,\phi}$, but every member of a distinct residue class modulo $n$ is mapped to the \textit{same} distinct value. It is this basic idea that gives us our group action. 

\begin{theorem}\label{thm:gp-action}
    Let $G$ be a finite group and $\Phi\leq Aut(G)$. Fix a pair $(g,\phi)\in G\ltimes Aut(G)$, and let $n$ be the smallest positive integer such that $s_{g,\phi}(n)=1$. Define the function as
    \begin{align*}
        \circledast:\quad &\ZZ_n\times \mathcal{X}_{g,\phi}\to \mathcal{X}_{g,\phi} \\
        &\modd{i}\circledast s_{g,\phi}(j) = i\ast s_{g,\phi}(j)
    \end{align*}
    The tuple $(\ZZ_n,\mathcal{X}_{g,\phi}, \circledast)$ is a free, transitive group action.
\end{theorem}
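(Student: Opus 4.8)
The plan is to verify the three requirements in turn: that $\circledast$ is a well-defined group action, that it is transitive, and that it is free. The bulk of the genuine content is already contained in Theorems~\ref{thm:addition-in-exps} and~\ref{cycle-structure}; what remains is largely bookkeeping, with one point---well-definedness---requiring a little care, and that is the step I expect to be the main (admittedly mild) obstacle.

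First I would establish that $\circledast$ is well-defined. The proposed rule sends $(\modd{i}, s_{g,\phi}(j))$ to $i \ast s_{g,\phi}(j)$, which by Theorem~\ref{thm:addition-in-exps} and the remark following Theorem~\ref{cycle-structure} equals $s_{g,\phi}(i+j)$. Two things could in principle go wrong: the output could depend on the choice of integer representative $i$ of the residue class $\modd{i}$, or on the choice of integer $j$ used to name the element $s_{g,\phi}(j)\in\mathcal{X}_{g,\phi}$. For the first, the remark after Theorem~\ref{cycle-structure} gives $s_{g,\phi}(i+j) = s_{g,\phi}((i+j) \bmod n)$, so the output depends only on $i \bmod n$. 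For the second, Theorem~\ref{cycle-structure} shows that $s_{g,\phi}(0),\dots,s_{g,\phi}(n-1)$ are pairwise distinct, so the set element $s_{g,\phi}(j)$ determines $j \bmod n$ uniquely; combined with the previous point, $s_{g,\phi}(i+j)$ is a function of $\modd{i}$ and $s_{g,\phi}(j)$ alone.

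Next I would check the group action axioms. For the identity, $\modd{0} \circledast s_{g,\phi}(j) = s_{g,\phi}(0+j) = s_{g,\phi}(j)$. For compatibility, given $\modd{i}, \modd{k} \in \ZZ_n$ one computes $\modd{i} \circledast (\modd{k} \circledast s_{g,\phi}(j)) = \modd{i} \circledast s_{g,\phi}(k+j) = s_{g,\phi}(i+k+j) = (\modd{i}+\modd{k}) \circledast s_{g,\phi}(j)$, using well-definedness and $\modd{i+k} = \modd{i}+\modd{k}$ in $\ZZ_n$. Hence $(\ZZ_n, \mathcal{X}_{g,\phi}, \circledast)$ is a group action on the non-empty set $\mathcal{X}_{g,\phi}$ (non-empty since $1 = s_{g,\phi}(0)$ lies in it).

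Finally, transitivity and freeness. For transitivity, given $s_{g,\phi}(j), s_{g,\phi}(j') \in \mathcal{X}_{g,\phi}$, the class $\modd{j'-j}$ satisfies $\modd{j'-j} \circledast s_{g,\phi}(j) = s_{g,\phi}(j'-j+j) = s_{g,\phi}(j')$. For freeness, suppose $\modd{i} \circledast s_{g,\phi}(j) = s_{g,\phi}(j)$ for some $j$; then $s_{g,\phi}(i+j) = s_{g,\phi}(j)$, and since $s_{g,\phi}$ is $n$-periodic and injective on a complete residue system modulo $n$ (Theorem~\ref{cycle-structure}), we get $i+j \equiv j \pmod n$, i.e. $\modd{i} = \modd{0}$; thus every stabiliser is trivial. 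Alternatively, since $|\ZZ_n| = n = |\mathcal{X}_{g,\phi}|$ by Theorem~\ref{cycle-structure}, freeness is forced by transitivity via a counting argument. Either route completes the proof.
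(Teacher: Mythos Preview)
Your proposal is correct and follows essentially the same route as the paper's own proof: verify well-definedness, check the identity and compatibility axioms, then establish transitivity and freeness directly. If anything, you are slightly more careful than the paper---you explicitly address independence from the choice of $j$ naming the set element $s_{g,\phi}(j)$, and your freeness argument works from a single fixed point rather than assuming $\modd{i}$ fixes every element---but these are refinements within the same argument, not a different approach.
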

\begin{proof}
    First, let us see that $\circledast$ is well-defined. Suppose $i\cong j\mod n$, then $i=j+kn$ for some $k\in\ZZ$. For some arbitrary $\mathcal{X}_{g,\phi}$, say $s_{g,\phi}(l)$ for $0\leq l< n$, one has
    \begin{align*}
        i\ast s_{g,\phi}(l) &= (j+kn)\ast s_{g,\phi}(l) \\
        &= j\ast s_{g,\phi}(l+kn) \\
        &= j\ast s_{g,\phi}(l)
    \end{align*}
    We also need to verify that the claimed tuple is indeed a group action. In order to check that the identity in $\ZZ_n$ fixes each $\mathcal{X}_{g,\phi}$, by the well-definedness just demonstrated, it suffices to check that $0\ast s_{g,\phi}(l)=s_{g,\phi}(l)$ for each $0\leq l< n$ - which indeed is the case. For the compatibility of the action with modular addition, note that for $0\leq i,j,k<n-1$ one has
    \begin{align*}
        \modd{k}\circledast(\modd{j}\circledast s_{g,\phi}(i)) &= \modd{k}\circledast s_{g,\phi}(i+j\mod n)\\
        &= s_{g,\phi}(i+j+k\mod n) \\
        &= \modd{j+k}\circledast s_{g,\phi}(i)
    \end{align*}
    as required. It remains to check that the action is free and transitive. First, suppose $\modd{i}\in\ZZ_n$ fixes each $s_{g,\phi}(j)\in\mathcal{X}_{g,\phi}$. By the above we can assume without loss of generality that $0\leq i<n-1$, and we have $\phi^j(s_{g,\phi}(i))s_{g,\phi}(j)=s_{g,\phi}(j)$. It follows that $s_{g,\phi}(i)=1$, so we must have $i=0$ as required. For transitivity, for any pair $s_{g,\phi}(i),s_{g,\phi}(j)$ we have $\modd{j-i}\circledast s_{g,\phi}(i)=s_{g,\phi}(j)$, and we are done.
\qed
\end{proof}

Recalling that the set $\mathcal{X}_{g,\phi}$ and the period $n$ are a function of the pair $(g,\phi)$, we have actually shown the existence of a large family of group actions. Nevertheless, we have only really shown the existence of the crucial parameter $n$ - it is not necessarily clear how this value should be calculated. With this in mind let us conclude the section with a step in this direction:
\begin{theorem}\label{thm:n-div}
    Fix a pair $(g,\phi)\in G\ltimes Aut(G)$. Let $n$ be the smallest integer such that $s_{g,\phi}(n)=1$, then $n$ divides the order of the pair $(g,\phi)$ as a group element in $G\ltimes Aut(G)$.
\end{theorem}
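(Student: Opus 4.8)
The plan is to run the familiar ``minimality plus division algorithm'' argument — the one used to show that the order of a group element divides any exponent that annihilates it — transported to the map $s_{g,\phi}$. Write $N$ for the order of $(g,\phi)$ in $G\ltimes Aut(G)$. The first step is to observe that $N$ is itself an exponent at which $s_{g,\phi}$ vanishes: from $(g,\phi)^N=(1,id)$ together with the defining identity $(g,\phi)^N=(s_{g,\phi}(N),\phi^N)$ we read off $s_{g,\phi}(N)=1$. In particular the minimal such positive exponent $n$ satisfies $n\le N$, and we may apply the division algorithm to write $N=qn+r$ with $0\le r<n$; the whole point is then to force $r=0$.

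The second step is to establish that $s_{g,\phi}(qn)=1$ for every $q\ge 0$, by induction on $q$ using Theorem~\ref{thm:addition-in-exps}. The base case $q=0$ is $s_{g,\phi}(0)=1$, immediate from $(g,\phi)^0=(1,id)$. For the inductive step, $s_{g,\phi}\big((q{+}1)n\big)=s_{g,\phi}(qn+n)=\phi^{n}\big(s_{g,\phi}(qn)\big)\,s_{g,\phi}(n)=\phi^{n}(1)\cdot 1=1$, where we use the inductive hypothesis, the definition of $n$, and the fact that an automorphism fixes the identity.

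The third step finishes the argument: applying Theorem~\ref{thm:addition-in-exps} once more with $i=qn$ and $j=r$ gives
\[
1=s_{g,\phi}(N)=s_{g,\phi}(qn+r)=\phi^{r}\big(s_{g,\phi}(qn)\big)\,s_{g,\phi}(r)=\phi^{r}(1)\,s_{g,\phi}(r)=s_{g,\phi}(r).
\]
Since $0\le r<n$ and $n$ is the \emph{smallest} positive integer with $s_{g,\phi}(n)=1$, the only possibility is $r=0$, i.e.\ $n\mid N$. (One can also shortcut the second and third steps by invoking the remark following Theorem~\ref{cycle-structure}, that $s_{g,\phi}$ is constant on residue classes modulo $n$: this gives $s_{g,\phi}(N)=s_{g,\phi}(N\bmod n)$ directly, whence $s_{g,\phi}(N\bmod n)=1$ and minimality again forces $N\bmod n=0$.)

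I do not anticipate a genuine obstacle here. The single point that needs a little care is not stopping at the easy bound $n\le N$ but actually extracting divisibility — which is precisely what the division-algorithm bookkeeping in the first and third steps handles — and, as a minor technical nicety, remembering that $\phi^r$ is an automorphism so that $\phi^r(1)=1$; everything else is a direct application of results already established in the excerpt.
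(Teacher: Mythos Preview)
Your proof is correct and follows essentially the same division-algorithm-plus-minimality strategy as the paper's own argument. The only cosmetic difference is that the paper reaches $s_{g,\phi}(l)=1$ by invoking the explicit product formula $s_{g,\phi}(x)=\phi^{x-1}(g)\cdots\phi(g)g$ and regrouping the factors of $s_{g,\phi}(m)$ into blocks of length $n$, whereas you obtain the same conclusion more cleanly by iterating Theorem~\ref{thm:addition-in-exps}; the underlying computation is identical.
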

\begin{proof}
    Suppose $m=ord((g,\phi))$. Certainly $s_{g,\phi}(m)=1$, and by definition one has $m\geq n$. We can therefore write $m=kn+l$, for $k\in\NN$ and $0\leq l< n$.
    It is not too difficult to verify that $s_{g,\phi}(x)=\phi^{x-1}(g)...\phi(g)g$ for any $x\in\NN$. It follows that
    \[s_{g,\phi}(m)=\phi^{kn}(s_{g,\phi}(l))\phi^{(k-1)n}(s_{g,\phi}(n))...\phi^n(s_{g,\phi}(n))s_{g,\phi}(n)\]
    Since $s_{g,\phi}(m)=s_{g,\phi}(n)=1$, we must have $s_{g,\phi}(l)=1$. But $l<n$ and so $l=0$ by the minimality of $n$, which in turn implies that $n|m$ as required.
\qed
\end{proof}

\subsection{Semidirect Discrete Logarithm Problem}\label{sec:sdlp}
Given a group $G$ and a pair $(g,\phi)\in G\ltimes Aut(G)$, observe that as a consequence of Theorem~\ref{thm:addition-in-exps} and Definition~\ref{def:step}, for any two integers $i,j\in\NN$ we have that $s_{g,\phi}(i+j)=j\ast s_{g,\phi}(i)=i\ast s_{g,\phi}(j)$. A Diffie-Hellman style key exchange immediately follows\footnote{Historically speaking, the key exchange predates the more abstract treatment in this paper.}; indeed, a key exchange based on this idea first appears in \cite{habeeb2013public}, and is known as Semidirect Product Key Exchange. In the same way that the security of Diffie-Hellman key exchange is related to the security of the Discrete Logarithm Problem, to understand the security of Semidirect Product Key Exchange we should like to study the difficulty of the following task:

\begin{definition}[Semidirect Discrete Logarithm Problem]
    Let $G$ be a finite group, and let $(g,\phi)\in G\times Aut(G)$. Suppose, for some $x\in\NN$, that one is given $(g,\phi),s_{g,\phi}(x)$; the Semidirect Discrete Logarithm Problem (SDLP) with respect to $(g,\phi)$ is to recover the integer $x$.
\end{definition}

The complexity of SDLP is relatively well understood, in large part due to the connection with group actions highlighted above. We will see later on that the security game advantages for our identification and signature schemes can be bounded in terms of the advantage of an adversary in solving SDLP; indeed, for the SDLP attack game defined in the obvious way, we write the advantage of an adversary {\tt sdlp-adv}($\mathcal{A}$,$(g,\phi)$).

Before we move on to study the signature schemes resulting from each group action we note that the convention in the area is to restrict a finite group $G$ to be a finite, \textit{non-abelian} group $G$. This was in part to preclude a trivial attack on the related key exchange for a specific choice of $\phi$ - nevertheless, throughout the rest of the paper we adopt this convention.

\section{{\tt SPDH-Sign}}

\subsection{An Identification Scheme}
Recall that our strategy is to set up an honest-verifier identification scheme, to which we can apply the well-known Fiat-Shamir heuristic and obtain strong security guarantees in the ROM. The central idea of this identification scheme is as follows: suppose we wish to prove knowledge of some secret $\ZZ_n$ element, say $\modd{s}$. We can select an arbitrary element of $\mathcal{X}_{g,\phi}$, say $X_0$, and publish the pair $X_0,X_1:=\modd{s}\circledast X_0$. An honest party wishing to verify our knowledge of the secret $\modd{s}$ might invite us to commit to some group element $\modd{t}$, for $\modd{t}$ sampled uniformly at random from $\ZZ_n$. We can do this by sending the element $I=\modd{t}\circledast X_0$ - note that as a consequence of the free and transitive properties, $\modd{t}$ is the unique group element such that $I=\modd{t}\circledast X_0$. However, with our knowledge of the secret $\modd{s}$ and the commitment $\modd{t}$, we can calculate the element $\modd{p}=\modd{t-s}$ such that $\modd{p}\circledast X_1=I$, where this equation holds by the group action axioms: one has $\modd{t-s}\circledast (\modd{s}\circledast X_0)=\modd{t}\circledast X_0=I$.

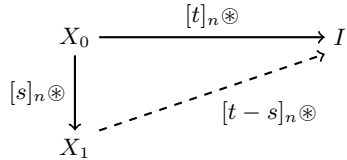
\begin{figure}

\begin{center}
    \begin{tikzpicture}

\centering
\node (s0) {$X_0$};
\node (s1) [below=of s0] {$X_1$};
\node (comm) [right=3cm of s0] {$I$};

\draw[thick, ->] (s0.south) -- (s1.north) node[midway,left]  {$\modd{s}\circledast $};
\draw[thick, ->] (s0.east) -- (comm.west) node[midway,above] {$\modd{t}\circledast$};
\draw[dashed, thick, ->] (s1.north east) -- (comm.south west) node[midway,below right] {$\modd{t-s}\circledast$};

\end{tikzpicture}
\end{center}

\caption{Paths to the commitment.}
\label{fig:comm-paths}
\label{graph-int}
\end{figure}

Interpreted graph-theoretically (as depicted in Figure~\ref{graph-int}), an honest verifier can ask to see one of two paths to the commitment value. Consider a dishonest party attempting to convince the verifier that they possess the secret $\modd{s}$. In attempting to impersonate the honest prover, our dishonest party can generate their own value of $\modd{t}$, and so can certainly provide the correct path in one of the two scenarios. Assuming, however, that recovering the appropriate group element is difficult, without knowledge of the secret $\modd{s}$ this party succeeds in their deception with low probability.

This intuition gives us the following non-rigorous argument of security in the framework described in Section~\ref{sec:id-schemes}. First, recall that we are in the honest verifier scenario, and so a challenge bit $c$ will be $0$ with probability $1/2$, in which case a cheating prover succeeds with probability 1. Supposing that $\varepsilon$ is the probability of successfully recovering the value $\modd{t-s}$, it follows that a cheating prover succeeds with probability $(1+\varepsilon)/2$ - that is, with probability larger than $1/2$. We can quite easily counter this by requiring that $N$ instances are run at the same time. In this case, if $N$ zeroes are selected the prover wins with probability $1$ by revealing their dishonestly generated values of $\modd{t}$ - otherwise, they are required to recover at least 1 value of $\modd{t-s}$. Assuming for simplicity that the probability of doing so remains consistent regardless of the number of times such a value is to be recovered, since the honest verifier selects their challenges uniformly at random the cheating prover succeeds with probability
\[\frac{1}{2^N}+\sum_{i=1}^{2^N-1}\frac{\varepsilon}{2^N}=\frac{1}{2^N}+\varepsilon\frac{2^{N}-1}{2^N}\]
which tends to $\varepsilon$ as $N\to\infty$.

The actual proof of security operates within the security games defined in the preliminaries. As a step towards this formalisation, we need to specify the binary relation our identification scheme is based on. Choose some finite non-abelian group $G$: given a fixed pair $(g,\phi)\in G\times Aut(G)$ we are interested, by Theorem~\ref{thm:gp-action}, in a subset $\mathcal{R}$ of $\ZZ_n,\mathcal{X}_{g,\phi}$, where $n$ is the smallest integer such that $s_{g,\phi}(n)=1$. In fact, legislating for $N$ parallel executions of the proof of knowledge, to each tuple $(X_1,...,X_N)$ is associated a binary relation 
\[\mathcal{R}\subset\ZZ_n^N\times\mathcal{X}_{g,\phi}^N\]
where $((\modd{s_1},...,\modd{s_N}),(Y_1,...,Y_N))\in\mathcal{R}$ exactly when $(Y_1,...,Y_N)=(\modd{s_1}\ast X_1,...,\modd{s_N}\ast X_N)$.

With all this in mind let us define our identification scheme. The more rigorous presentation should not distract from the intuition that we describe $N$ parallel executions of the game in Figure~\ref{fig:comm-paths}.

\begin{pro}
Let $G$ be a finite non-abelian group and $(g,\phi)\in G\ltimes Aut(G)$. Suppose also that $n\in\mathbb{N}$ is the smallest integer such that $s_{g,\phi}(n)=1$. The identification scheme {\tt SPDH-ID}$_{g,\phi}(N)$ is a triple of algorithms

\[\text{({\tt KeyGen}$_{g,\phi}$,{\tt P}$_{g,\phi}$,{\tt V}$_{g,\phi}$)}\]

such that
\begin{enumerate}
    \item {\tt KeyGen}$_{g,\phi}$ takes as input some $N\in\mathbb{N}$.
    \begin{algorithmic}
        \State $(X_1,...,X_N)\gets \mathcal{X}_{g,\phi}^N$
        \State $([s_1]_n,...,[s_N]_n)\gets\ZZ_n^N$
        \State $(Y_1,...,Y_N)\gets ([s_1]_n\circledast X_1,...,[s_N]_n\circledast X_N)$   
    \end{algorithmic}
     {\tt KeyGen}$_{g,\phi}$ outputs the public key $((X_1,...,X_N),(Y_1,...,Y_N))$ and passes the secret key $([s_1]_n,...,[s_N]_n)$ to the prover {\tt P}$_{g,\phi}$. The public key and the value of $N$ used is published.
     \item {\tt P}$_{g,\phi}$ and {\tt V}$_{g,\phi}$ are interactive algorithms that work as depicted in Figure~\ref{fig:spdh-id}:
     \begin{figure}
     \begin{center}
        \begin{tikzpicture}
            \node (peg) {\large{\tt P}$_{g,\phi}$};
            \node (vic) [right=4cm of peg] {\large{\tt V}$_{g,\phi}$};
            \node (pick_commit) [below= 0.2cm of peg] {
            \begin{varwidth}{\linewidth}
            \begin{algorithmic}
                \For{$i\gets1,N$}
                    \State$\modd{t_i}\getsr\ZZ_n$
                    \State$I_i\gets\modd{t_i}\circledast X_i$
                \EndFor
                \State $I\gets(I_1,...,I_N)$
            \end{algorithmic}
            \end{varwidth}
            };
            \node (peg_commit) [below= 0.1cm of pick_commit] {};
            \node (vic_commit) [right=4cm of peg_commit] {};
            \draw[->] (peg_commit.east) -- (vic_commit.west) node[midway,above] {$I$};
            \node (vic_select) [below= 0.1cm of vic_commit] {
            \begin{varwidth}{\linewidth}
            \begin{algorithmic}
                \For{$i\gets1,N$}
                    \State$c_i\getsr\{0,1\}$
                \EndFor
                \State $c\gets(c_1,...,c_N)$
            \end{algorithmic}
            \end{varwidth}
            };
            \node (vic_request) [below=0.1cm of vic_select] {};
            \node (peg_request) [left=4cm of vic_request] {};
            \draw[<-] (peg_request.east) -- (vic_request.west) node[midway,above] {$c$};
            \node (peg_calculate) [below = 0.3cm of peg_request] {
            \begin{varwidth}{\linewidth}
            \begin{algorithmic}
                \For{$i\gets1,N$}
                    \If{$c_i=0$}
                    \State$\modd{p_i}\gets\modd{t_i}$
                    \Else
                    \State$\modd{p_i}\gets\modd{t_i-s_i}$
                    \EndIf
                \EndFor
                \State $p\gets(\modd{p_1},...,\modd{p_N})$
            \end{algorithmic}
            \end{varwidth}
            };
            \node (peg_confirm) [below=0.1cm of peg_calculate] {};
            \node (vic_confirm) [right=4cm of peg_confirm] {};
            \draw[->] (peg_confirm.east) -- (vic_confirm.west) node[midway,above] {$p$};
            \node (vic_verify) [below=0.1cm of vic_confirm] {
            \begin{varwidth}{\linewidth}
            \begin{algorithmic}
                \For{$i\gets1,N$}
                    \If{$c_i=0$}
                    \State $V_i\gets\modd{p_i}\circledast X_i$
                    \Else
                    \State $V_i\gets\modd{p_i}\circledast Y_i$
                    \EndIf
                \EndFor
                \State$V\gets(V_1,...,V_N)$
                \State $d\gets I\stackrel{?}{=}V$
                \State \Return $d$
            \end{algorithmic}
                
            \end{varwidth}
            };
        \end{tikzpicture}
    \end{center}
    \caption{{\tt SPDH-ID}}
    \label{fig:spdh-id}
    \end{figure}
\end{enumerate}
\end{pro}

\subsubsection{Security}

In this section we demonstrate that {\tt SPDH-ID} is secure against eavesdropping attacks in the following sense: the advantage of an adversary in the eavesdropping attack game can be bounded by that of the adversary in the SDLP game. First, let us check that the desirable properties of an identification scheme hold:

\begin{theorem}\label{zkp-properties}
    {\tt SPDH-ID} has the following properties:
    \begin{enumerate}
        \item Completeness
        \item Special soundness
        \item Special honest-verifier zero knowledge.
    \end{enumerate}
\end{theorem}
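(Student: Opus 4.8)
The plan is to verify each of the three properties directly from the structure of {\tt SPDH-ID}, working one parallel coordinate at a time and then noting that everything holds simultaneously across all $N$ coordinates since the challenge bits are independent.

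\textbf{Completeness.} First I would fix an index $i$ and show that an honest transcript always verifies in coordinate $i$. If $c_i=0$, the prover sends $\modd{p_i}=\modd{t_i}$ and the verifier recomputes $V_i=\modd{t_i}\circledast X_i$, which is exactly $I_i$ by definition. If $c_i=1$, the prover sends $\modd{p_i}=\modd{t_i-s_i}$ and the verifier computes $V_i=\modd{t_i-s_i}\circledast Y_i=\modd{t_i-s_i}\circledast(\modd{s_i}\circledast X_i)$, which by compatibility of the action (Theorem~\ref{thm:gp-action}) equals $\modd{t_i}\circledast X_i=I_i$. Since this holds for every $i$, the full vectors satisfy $I=V$ and the verifier accepts.

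\textbf{Special soundness.} Given two accepting transcripts $(I,c,p)$ and $(I,c^*,p^*)$ with the same commitment and $c\neq c^*$, there is some coordinate $i$ with $c_i\neq c_i^*$; say $c_i=0$ and $c_i^*=1$. Then from the first transcript $I_i=\modd{p_i}\circledast X_i$, so $\modd{p_i}$ is (by freeness and transitivity, Theorem~\ref{thm:gp-action}) the unique group element sending $X_i$ to $I_i$ — hence $\modd{p_i}=\modd{t_i}$. From the second, $I_i=\modd{p_i^*}\circledast Y_i$, so similarly $\modd{p_i^*}=\modd{t_i-s_i}$. Therefore $\modd{s_i}=\modd{p_i}-\modd{p_i^*}=\modd{p_i-p_i^*}$ can be recovered by a single subtraction in $\ZZ_n$. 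Doing this for every coordinate where the challenges differ recovers the corresponding $\modd{s_i}$. Strictly, two transcripts that differ in only some coordinates recover only those components of the secret; I would phrase the extractor as outputting $\modd{p_i-p_i^*}$ (up to sign depending on which of $c_i,c_i^*$ is $0$) in each differing coordinate, and note this matches the definition in Definition~\ref{def:knowledge} when the challenge vectors differ — in particular when they differ in every coordinate the whole witness is recovered.

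\textbf{Special honest-verifier zero knowledge.} Given a statement $((X_i),(Y_i))$ and a challenge vector $c$, I would describe the simulator: for each $i$, sample $\modd{p_i}\getsr\ZZ_n$ uniformly, and set $I_i=\modd{p_i}\circledast X_i$ if $c_i=0$ and $I_i=\modd{p_i}\circledast Y_i$ if $c_i=1$; output $(I,c,p)$ with $I=(I_1,\dots,I_N)$, $p=(\modd{p_1},\dots,\modd{p_N})$. This transcript passes by construction. For the distribution: in a real transcript, conditioned on $c_i=0$ we have $\modd{p_i}=\modd{t_i}$ which is uniform on $\ZZ_n$ and $I_i=\modd{p_i}\circledast X_i$ is determined by it; conditioned on $c_i=1$ we have $\modd{p_i}=\modd{t_i-s_i}$ which is again uniform on $\ZZ_n$ (a shift of the uniform $\modd{t_i}$) and $I_i=\modd{p_i}\circledast Y_i$. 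In both cases the joint law of $(I_i,\modd{p_i})$ given $c_i$ matches the simulator exactly, and independence across $i$ gives the full claim.

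The routine parts are the algebraic identities, all of which are immediate from the group action axioms established in Theorem~\ref{thm:gp-action} and the relation $\modd{t_i-s_i}\circledast Y_i=\modd{t_i}\circledast X_i$. The only point requiring a little care — and the mild obstacle — is the bookkeeping in special soundness: being precise that the extractor needs the challenge vectors to differ in a coordinate (and that it recovers exactly the components where they differ), and matching the sign/direction of the subtraction to which challenge bit was $0$ versus $1$. I would state this carefully so that the subsequent Fiat–Shamir security argument can invoke it cleanly.
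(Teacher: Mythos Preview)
Your proposal is correct and follows essentially the same route as the paper: reduce to a single coordinate, verify completeness by the group-action identity, extract the secret via subtraction of the two responses, and simulate by sampling the response uniformly and back-computing the commitment. You are in fact more careful than the paper about special soundness in the parallel case---the paper simply asserts that the $N=1$ argument ``immediately'' generalises, whereas you correctly flag that two challenge vectors differing in only some coordinates yield only those components of the witness; this caveat is worth keeping, since the downstream security bound via \cite[Theorem~19.13]{boneh2020graduate} uses the size $2^N$ of the challenge space and so implicitly relies on exactly the extraction you describe.
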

\begin{proof}
Note that in order to prove each of these properties on the $N$-tuples comprising the transcripts generated by {\tt SPDH-ID}, we need to prove that the properties hold for each component of the tuple; but since each component is independent of all the others, it suffices to demonstrate the stated properties for a single arbitrary component. In other words, we show that the stated properties hold when $N=1$, and the general case immediately follows.
\begin{enumerate}
    \item If $b=0$ then $\modd{p}=\modd{t}$, and trivially we are done. If $b=1$ then $\modd{p}=\modd{t-s}$; doing the bookkeeping we get that
    \begin{align*}
        \modd{p}\circledast S_1 &= \modd{p} \circledast (\modd{s}\circledast S_0) \\
        &= (\modd{t-s}\modd{s})\circledast S_0 \\
        &= (\modd{s}\circledast S_0) = I
    \end{align*}
    \item Two passing transcripts with the same commitment are $(I,0,\modd{t})$ and $(I,1,\modd{t-s})$. Labelling the two responses $x^{p_1},x^{p_2}$, we recover the secret as $(x^{p_2})^{-1}(x^{p_1})$.
    \item It suffices to show that one can produce passing transcripts with the same distribution as legitimate transcripts, but without knowledge of $\modd{s}$. We have already discussed how to produce these transcripts; if a simulator samples $\modd{t}$ uniformly at random, then the transcript $(\modd{t}\circledast S_b, b, \modd{t})$ is valid regardless of the value of $b$. Moreover, if $b=0$, trivially the transcripts have the same distribution; if $b=1$, since $\modd{s}$ is fixed and $\modd{t}$ is sampled uniformly at random, the distribution of a legitimate passing transcript is also uniformly random.
\end{enumerate}
\qed
\end{proof}

We are now ready to bound on the security of our identification scheme.

\begin{theorem}\label{thm:id-security}
    Let $G$ be a finite abelian group and let $(g,\phi)\in G\ltimes Aut(G)$. For some $N\in\NN$, consider the identification scheme {\tt SPDH-ID}$_{g,\phi}(N)$ and an efficient adversary $\mathcal{A}$. There exists an efficient adversary $\mathcal{B}$ with $\mathcal{A}$ as a subroutine, such that with $\varepsilon=$ {\tt sdlp-adv}$(\mathcal{B},(g,\phi))$, we have
    \[\text{{\tt eav-adv}}(\mathcal{A},\text{{\tt SPDH-ID}}_{g,\phi}(N))\leq \sqrt{\varepsilon}+\frac{1}{2^N}\]
\end{theorem}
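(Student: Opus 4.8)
The plan is to follow the standard template for Fiat--Shamir-style identification schemes (essentially the route of \cite[Section~18]{boneh2020graduate}), adapted to the parallel-repetition structure of {\tt SPDH-ID}. First I would reduce the eavesdropping attack to the direct attack: by the special honest-verifier zero-knowledge property established in Theorem~\ref{zkp-properties}, the transcript oracle $\mathcal{T}$ can be simulated without the secret key, so any eavesdropping adversary $\mathcal{A}$ yields a direct-attack adversary $\mathcal{A}'$ with essentially the same advantage (the simulation is perfect here, since we showed the simulated transcripts are identically distributed to honest ones). Thus it suffices to bound {\tt dir-adv}$(\mathcal{A}',\text{{\tt SPDH-ID}}_{g,\phi}(N))$.

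Next I would bound the direct-attack advantage using special soundness together with a rewinding/extraction argument. The key combinatorial point is the parallel structure: a transcript is a tuple of $N$ independent sigma-protocol instances with challenge space $\{0,1\}$ each, so the overall challenge space has size $2^N$. Running $\mathcal{A}'$ on a fixed commitment $I$ and rewinding with a fresh random challenge, a standard argument (the ``two incompatible transcripts'' or forking lemma for $2$-special-sound protocols) shows that with probability related to $\bigl(\text{{\tt dir-adv}}(\mathcal{A}')\bigr)^2$ one obtains two accepting transcripts $(I,c,p)$ and $(I,c^*,p^*)$ with $c\neq c^*$. Since $c,c^*$ differ in at least one coordinate $i$, the two restricted transcripts $(I_i,c_i,p_i)$ and $(I_i,c_i^*,p_i^*)$ have the same commitment and distinct challenges, so special soundness recovers $\modd{s_i}$ for that coordinate --- equivalently, it solves the SDLP-type relation at position $i$. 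I would then package this extractor as the adversary $\mathcal{B}$ against SDLP: given a challenge $s_{g,\phi}(x)$ for SDLP with respect to $(g,\phi)$, $\mathcal{B}$ embeds it into a random coordinate of a simulated public key, runs the extraction, and outputs the recovered exponent. Here one uses that the action is free and transitive (Theorem~\ref{thm:gp-action}) so that the recovered $\ZZ_n$-element is exactly the SDLP solution, and that the embedded instance is distributed correctly because {\tt KeyGen}$_{g,\phi}$ samples $X_i$ and $\modd{s_i}$ uniformly.

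Finally I would assemble the inequality. Writing $\delta = \text{{\tt dir-adv}}(\mathcal{A}',\text{{\tt SPDH-ID}}_{g,\phi}(N))$, the guessing term $1/2^N$ accounts for the case in which the two challenges sampled during rewinding coincide (so no extraction is possible), and the extraction succeeds with probability at least $\delta(\delta - 1/2^N)$ or so; rearranging and using $\varepsilon = \text{{\tt sdlp-adv}}(\mathcal{B},(g,\phi)) \geq$ (extraction success probability) gives $\delta \leq \sqrt{\varepsilon} + 1/2^N$, and combining with $\text{{\tt eav-adv}}(\mathcal{A}) = \delta$ (from the first step) yields the claimed bound. The main obstacle I anticipate is getting the rewinding bookkeeping exactly right so that the $\sqrt{\varepsilon}$ and the additive $1/2^N$ fall out cleanly: one has to be careful that the ``bad'' event of a repeated challenge is precisely the $1/2^N$ term, that the square root genuinely comes from the quadratic loss in the forking step (rather than a worse constant), and that the reduction $\mathcal{B}$ remains efficient --- i.e. that a single rewind suffices and the expected running time is polynomial. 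A secondary subtlety is that the theorem statement says ``finite abelian group'' whereas the scheme is set up for non-abelian $G$; I would treat this as a typo for ``finite non-abelian group'' and not let it affect the argument, since commutativity of $G$ is never used --- only commutativity of the $\ZZ_n$-action, which always holds.
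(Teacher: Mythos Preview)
Your proposal is correct and follows essentially the same route as the paper: reduce eavesdropping to direct attacks via the HVZK simulator (the paper cites \cite[Theorem~19.14]{boneh2020graduate}), then bound the direct-attack advantage via special soundness and rewinding with challenge space of size $2^N$ (the paper cites \cite[Theorem~19.13]{boneh2020graduate}), and finally relate the inversion advantage to the SDLP advantage. The one small divergence is in the last step: you embed the SDLP challenge into a \emph{random} coordinate of the public key and hope the extractor lands on that coordinate, whereas the paper simply hands the inversion adversary $N$ copies of the \emph{same} SDLP instance, so that whichever coordinate the extractor recovers solves the challenge with no loss. Your variant risks an extra $1/N$-type factor in the reduction; the paper's ``repeat the instance'' trick is both simpler and tighter, and you may want to adopt it.
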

\begin{proof}
    This is just a straightforward application of two results in \cite{boneh2020graduate}. By \cite[Theorem~19.14]{boneh2020graduate}, since {\tt SPDH-ID}$_{g,\phi}(N)$ has honest verifier zero knowledge, there exists an efficient adversary $\mathcal{B'}$ with $\mathcal{A}$ as a subroutine such that 
    \[\text{{\tt eav-adv}}(\mathcal{A},\text{{\tt SPDH-ID}}_{g,\phi}(N))=\text{{\tt dir-adv}}(\mathcal{B'},\text{{\tt SPDH-ID}}_{g,\phi}(N))\]
    Moreover, let 
    \[\delta=\text{{\tt inv-adv}}(\mathcal{B'},\text{{\tt KeyGen}}_{g,\phi})\]
    Since {\tt SPDH-ID}$_{g,\phi}(N)$ has special soundness, \cite[Theorem~19.13]{boneh2020graduate} gives
    \[\text{{\tt dir-adv}}(\mathcal{B},\text{{\tt SPDH-ID}}_{g,\phi}(N))\leq \sqrt{\delta}+\frac{1}{M}\]
    where $M$ is the size of the challenge space. It is easy to see that $M=2^N$; it remains to relate the quantities 
    $\varepsilon$ and $\delta$. We do so eschewing some of the detail since the argument is straightforward; note that by definition of the binary relation underpinning {\tt KeyGen}$_{g,\phi}$, we can think of the inversion attack game as a security game in which one solves $N$ independent SDLP instances in parallel. Call the advantage in this game {\tt N-sdlp-adv}$(\mathcal{B'},(g,\phi))$, and suppose an adversary $\mathcal{B}$ in the standard SDLP attack game runs $\mathcal{B'}$ as an adversary. $\mathcal{B}$ can simply provide $\mathcal{B'}$ with $N$ copies of its challenge SDLP instance, and succeeds whenever $\mathcal{B'}$ does. It follows that $\delta\leq\varepsilon$, and we are done.  
\qed
\end{proof}

\subsection{A Digital Signature Scheme}
It remains now to apply the Fiat-Shamir transform to our identification scheme. Doing so yields the signature scheme claimed in the title of this paper.

\begin{pro}[{\tt SPDH-Sign}]
    Let $G$ be a finite non-abelian group and let $(g,\phi)\in G\times Aut(G)$ be such that $n$ is the smallest integer for which $s_{g,\phi}(n)$=1. For any $N\in\NN$ and message space $\mathcal{M}$, suppose we are provided a hash function $H:\mathcal{X}_{g,\phi}^N\times\mathcal{M}\to\{0,1\}^N$. We define the signature scheme 
    \[\text{{\tt SPDH-Sign}}_{g,\phi}(N)=(\text{{\tt KeyGen}, {\tt Sg}, {\tt Vf}})\] 
    as in Figure~\ref{fig:spdh-sign}.
\end{pro}

\begin{figure}
    \centering
    
    \begin{tabular}{lll}
    \begin{minipage}[t]{0.44\textwidth}
    {\tt KeyGen}$(N)$:
        \begin{algorithmic}
        \For{$i\gets 1,N$}
            \State $X_i\getsr\mathcal{X}_{g,\phi}$
            \State $\modd{s_i}\getsr\ZZ_n$
            \State $Y_i\gets\modd{s_i}\circledast X_i$
        \EndFor
        \State $sk\gets(\modd{s_1},...,\modd{s_N})$
        \State $pk\gets((X_1,...,X_N),(Y_1,...,Y_N))$
        \State \Return $(sk,pk)$
        \end{algorithmic}
    \end{minipage}
    & 
    \begin{minipage}[t]{0.28\textwidth}
    {\tt Sg}$(m,(sk,pk))$:
        \begin{algorithmic}
        \For{$i\gets1,N$}
            \State $\modd{t_i}\getsr\ZZ_n$
            \State $I_i\gets\modd{t_i}\circledast X_i$
        \EndFor
        \State $I\gets(I_1,...,I_N)$
        \State $c\gets H(I,m)$
        \For{$i\gets1,N$}
            \If{$c_i=0$}
            \State $p_i\gets \modd{t_i}$
            \Else
            \State $p_i\gets\modd{t_i-s_i}$
            \EndIf
        \EndFor
        \State $p\gets (p_1,...,p_N)$
        \State ($\sigma_1,\sigma_2) \gets (I,p)$
        \State \Return $(\sigma_1,\sigma_2)$
        \end{algorithmic}
    \end{minipage}
    &
    \begin{minipage}[t]{0.26\textwidth}
    {\tt Vf}$(m,(\sigma_1,\sigma_2),pk)$:
        \begin{algorithmic}
        \State $c\gets H(\sigma_1,m)$
        \For{$i\gets1,N$}
            \If {$c_i=0$}
            \State $V_i\gets p_i\circledast X_i$
            \Else
            \State $V_i\gets p_i\circledast Y_i$
            \EndIf
        \EndFor
        \State $V\gets(V_1,...,V_N)$
        \State $d\gets V\stackrel{?}{=}I $
        \State \Return $d$
        \end{algorithmic}
    \end{minipage}
\end{tabular}

    \caption{{\tt SPDH-Sign}}
    \label{fig:spdh-sign}
\end{figure}

It is easy to see that given the identification scheme {\tt SPDH-ID}$_{g,\phi}(N)$, the signature scheme {\tt SPDH-Sign}$_{g,\phi}(N)$ is exactly FS({\tt SPDH-ID}$_{g,\phi}(N)$). Before we can use this fact to prove the security of the signature, we require that the hash function gives outputs distributed at `random', in some sense. This is accounted for by the `Random Oracle Model': every time we wish to compute the hash function $H$, we suppose that an oracle function of the appropriate dimension selected at random is queried. Any party can query the random oracle at any time, and the number of these queries is kept track of. We also note that we do not in this paper account for the quantum-accessible random oracle model required for post-quantum security - equivalent security proofs in the quantum-accessible random oracle model are provided, for example, in \cite{beullens2019csi}.

With this heuristic in place we can prove the security of our signature scheme relative to SDLP with a simple application of \cite[Theorem~19.15]{boneh2020graduate} and its corollaries:

\begin{theorem}\label{thm:sig-sec}
    Let $G$ be a finite non-abelian group; $(g,\phi)\in G\ltimes Aut(G)$; and $n\in\NN$ be the smallest integer such that $s_{g,\phi}(n)=1$. Consider the chosen message attack game in the random oracle model, where $Q_s$ is the number of signing queries made and $Q_{ro}$ is the number of random oracle queries. For any efficient adversary $\mathcal{A}$ and $N\in\NN$, there exists an efficient adversary $\mathcal{B}$ running $\mathcal{A}$ as a subroutine such that the signature scheme {\tt SPDH-Sign}$_{g,\phi}(N)$ has
    \[\delta\leq \frac{Q_s}{n}(Q_s+Q_{ro}+1)+\frac{Q_{ro}}{2^N}+\sqrt{(Q_{ro}+1)\text{{\tt sdlp-adv}}(\mathcal{B},(g,\phi))}\]
    where $\delta=\text{{\tt cma-adv}}^{\text{{\tt ro}}}(\text{{\tt SPDH-Sign}}_{g,\phi}(N),\mathcal{A})$ is the advantage of the signature scheme in the random oracle model version of the chosen message attack game.
\end{theorem}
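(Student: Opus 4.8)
The plan is to recognize that \textbf{SPDH-Sign}$_{g,\phi}(N)$ is literally the Fiat--Shamir transform of \textbf{SPDH-ID}$_{g,\phi}(N)$, and then to invoke the textbook Fiat--Shamir security theorem \cite[Theorem~19.15]{boneh2020graduate} more or less verbatim. That theorem bounds the chosen-message advantage of a Fiat--Shamir signature in the random oracle model in terms of (i) the size of the challenge space, (ii) the min-entropy (or, more precisely, a collision/guessing bound) of the commitment distribution, and (iii) the inversion advantage against the key generation algorithm of the underlying identification scheme. So the proof is essentially a matter of instantiating the three quantities appearing in that generic bound with the concrete values for our scheme.

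First I would identify the challenge space: as already noted in the proof of Theorem~\ref{thm:id-security}, the challenge space of \textbf{SPDH-ID}$_{g,\phi}(N)$ is $\{0,1\}^N$, so $M = 2^N$; this produces the $Q_{ro}/2^N$ term. Second, I would bound the probability that a commitment collides or is guessed: a single commitment is $\modd{t}\circledast X_i$ with $\modd{t}$ uniform in $\ZZ_n$, and because the action is free and transitive this is uniform on $\mathcal{X}_{g,\phi}$, which has exactly $n$ elements by Theorem~\ref{cycle-structure}; hence each commitment component has min-entropy $\log_2 n$, and the standard accounting over $Q_s$ signing queries against $Q_s + Q_{ro} + 1$ potential hash inputs yields the $\frac{Q_s}{n}(Q_s + Q_{ro} + 1)$ term. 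Third, I would pull in the soundness-to-inversion reduction: \textbf{SPDH-ID}$_{g,\phi}(N)$ has special soundness (Theorem~\ref{zkp-properties}), so the Fiat--Shamir theorem converts a forger into an adversary against \textbf{KeyGen}$_{g,\phi}$, giving a $\sqrt{(Q_{ro}+1)\,\text{{\tt inv-adv}}}$-type term; finally, exactly as in the proof of Theorem~\ref{thm:id-security}, the inversion game against \textbf{KeyGen}$_{g,\phi}$ is $N$ parallel SDLP instances, and by feeding an SDLP challenger's instance into all $N$ slots one gets $\text{{\tt inv-adv}}(\mathcal{B'},\text{{\tt KeyGen}}_{g,\phi}) \le \text{{\tt sdlp-adv}}(\mathcal{B},(g,\phi))$, which I would substitute in to reach the stated inequality.

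Putting these together, the three contributions sum to exactly the claimed bound, and the composed adversary $\mathcal{B}$ is efficient because it runs $\mathcal{A}$ once plus polynomial bookkeeping. The main obstacle — really the only non-bookkeeping point — is making sure the hypotheses of \cite[Theorem~19.15]{boneh2020graduate} are all genuinely met: in particular that \textbf{SPDH-ID} fits the theorem's notion of an identification protocol with the right commitment-entropy condition, and that the honest-verifier zero-knowledge and special soundness established in Theorem~\ref{zkp-properties} are the precise forms that theorem requires. Once the correspondence between our notation and the hypotheses of that theorem is pinned down, and the $\text{{\tt inv-adv}} \le \text{{\tt sdlp-adv}}$ step is justified as above, the result follows immediately; I would therefore present the proof as a short citation-driven argument rather than re-deriving the Fiat--Shamir reduction from scratch.
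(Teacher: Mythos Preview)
Your proposal is correct and matches the paper's own proof essentially line for line: the paper also just invokes \cite[Theorem~19.15]{boneh2020graduate} (together with \cite[Equation~19.21]{boneh2020graduate}), identifies $\gamma = 1/n$ from the uniformity of the commitment over $\mathcal{X}_{g,\phi}$, sets $|\mathcal{C}| = 2^N$, and then reuses the $\text{{\tt inv-adv}} \le \text{{\tt sdlp-adv}}$ reduction from the proof of Theorem~\ref{thm:id-security}. If anything, your write-up is slightly more explicit about why each term takes the value it does.
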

\begin{proof}
    Applying \cite[Theorem~19.15]{boneh2020graduate} and \cite[Equation~19.21]{boneh2020graduate}, since the underlying identification scheme has honest verifier zero knowledge there is an efficient adversary $\mathcal{B'}$ running $\mathcal{A}$ as a subroutine such that 
    \[\delta\leq \gamma Q_s(Q_s+Q_{ro}+1)+\frac{Q_{ro}}{|\mathcal{C}|}+\sqrt{(Q_{ro}+1)\text{{\tt inv-adv}}(\mathcal{B},\text{{\tt KeyGen}}_{g,\phi})}\]
    where $\gamma$ is the probability that a given commitment value appears in a transcript, and {\tt KeyGen}$_{g,\phi}$ is the key generation algorithm of the underlying identification scheme. Since choosing a random group element corresponds to choosing a random element of $\mathcal{X}_{g,\phi}$, each commitment value in $\mathcal{X}_{g,\phi}$ has probability $1/|\mathcal{X}_{g,\phi}|=1/n$ of being selected. We have already seen in the proof of Theorem~\ref{thm:id-security} that the advantage of an adversary in the inversion attack game against this key generation algorithm is bounded by the advantage in an SDLP attack game, and the result follows.
\qed
\end{proof}

The above theorem provides a concrete estimate on the advantage of an adversary in the chosen message attack game; nevertheless, a plain English rephrasing is a useful reflection on these results. Essentially, we now know that the {\tt euf-cma} security of our signature scheme is reliant on the integer $n$ corresponding to the pair $(g,\phi)$, the size of $N$, and the difficulty of SDLP relative to the pair $(g,\phi)$. We can discount the reliance on $N$, which can be `artificially' inflated as we please; note also that we can intuitively expect the size of $n$ and the difficulty of SDLP for $(g,\phi)$ to be at least somewhat correlated, since a small value of $n$ trivially renders the associated SDLP instance easy by brute force. In essence, then, we have shown that we can expect the signature scheme corresponding to $(g,\phi)$ to be secure provided the associated SDLP instance is difficult.

\section{On the Difficulty of SDLP}

For any finite non-abelian group $G$, we have shown the existence of signature scheme for any pair $(g,\phi)\in G\times Aut(G)$. It is now clear from Theorem~\ref{thm:sig-sec} that if the signature is defined with respect to a pair $(g,\phi)$, SDLP with respect to $(g,\phi)$ should be difficult. In this section we discuss sensible choices of $G$ with respect to this criterion.

As alluded to in the title of this paper we are interested in post-quantum hard instances of SDLP; that is, if an instance of SDLP has a known reduction to a quantum-vulnerable problem we should consider it to be easy. 

There are three key strategies in the literature for addressing SDLP. Two of them, at face value, appear to solve a problem instead related to SDLP: let us explore the gap between the problems below.

\subsection{Dihedral Hidden Subgroup Problem}
It should first be noted that, as with all group action-based cryptography, the Dihredral Hidden Subgroup Problem will be highly relevant. Indeed, we can bound the complexity of SDLP above by appealing to Kuperberg's celebrated quantum algorithm for the Abelian Hidden Shift Problem \cite{kuperberg2005subexponential}, defined as follows:

\begin{definition}
    Let $A$ be an abelian group and $S$ be a set. Consider two injective functions $f,g:A\to S$ such that for some $h\in A$, we have $f(a)=g(a+h)$ for all $a\in A$. We say that the functions $f,g$ `hide' $h$, and the Abelian Hidden Shift Problem is to recover $h$ via queries to $f,g$.
\end{definition}

Adapting an argument seen throughout the literature, but first codified in its modern sense in \cite{childs2014constructing}, gives us the following result.
\begin{theorem}\label{thm:sdlp-comp}
    Let $G$ be a finite non-abelian group and let $(g,\phi)\in G\ltimes Aut(G)$; and $n\in\NN$ be the smallest integer such that $s_{g,\phi}(n)=1$. Given $(g,\phi)$ and a group element $s_{g,\phi}(x)$, there is a quantum algorithm that recovers $x$ in time $2^{\mathcal{O}(\sqrt{\log n})}$.
\end{theorem}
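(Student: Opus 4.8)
The plan is to recognise SDLP with respect to $(g,\phi)$ as the vectorisation problem for the free, transitive group action $(\ZZ_n,\mathcal{X}_{g,\phi},\circledast)$ furnished by Theorem~\ref{thm:gp-action}, reduce it to the Abelian Hidden Shift Problem exactly as in \cite{childs2014constructing}, and then finish with Kuperberg's subexponential quantum algorithm. Concretely: since $s_{g,\phi}(0)=1$, the instance data $(g,\phi)$ and $s_{g,\phi}(x)$ determine two elements $X_0:=1$ and $X_1:=s_{g,\phi}(x)=\modd{x}\circledast X_0$ of $\mathcal{X}_{g,\phi}$, and recovering $x$ is the same as recovering the unique $\modd{x}\in\ZZ_n$ with $\modd{x}\circledast X_0=X_1$.

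First I would define the pair of functions $f_0,f_1:\ZZ_n\to\mathcal{X}_{g,\phi}$ by $f_b(\modd{a}):=\modd{a}\circledast X_b$. Using the compatibility of $\circledast$ with addition in $\ZZ_n$ and $X_1=\modd{x}\circledast X_0$, one gets $f_1(\modd{a})=\modd{a}\circledast(\modd{x}\circledast X_0)=\modd{a+x}\circledast X_0=f_0(\modd{a+x})$, equivalently $f_0(\modd{a})=f_1(\modd{a}+\modd{-x})$, so $f_0$ and $f_1$ hide the shift $\modd{-x}$; a solver for the Abelian Hidden Shift Problem on these functions therefore returns $-x$, hence $x$. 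Freeness and transitivity (Theorem~\ref{thm:gp-action}) make each $f_b$ a bijection, so in particular injective as the problem demands, and each $f_b$ is efficiently computable as a quantum subroutine: evaluating $\modd{a}\circledast s_{g,\phi}(j)$ reduces to computing $(g,\phi)^a$ by square-and-multiply in $G\ltimes Aut(G)$ followed by one application of $\ast$ (Definition~\ref{def:step}).

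Kuperberg's algorithm \cite{kuperberg2005subexponential} then solves this Abelian Hidden Shift instance over a group of order $n$ in $2^{\mathcal{O}(\sqrt{\log n})}$ quantum time and queries; since $|\mathcal{X}_{g,\phi}|=n$ this is exactly the stated bound. The part needing the most care — and the main obstacle — is the bookkeeping around the acting group $\ZZ_n$: Kuperberg's routine needs a usable description of $\ZZ_n$, so one must argue that $n$ can be obtained (for instance from Theorem~\ref{thm:n-div} combined with a quantum order-finding step on $(g,\phi)$), or else invoke a variant of the algorithm that tolerates merely an upper bound on the group order; and one must fix a representation of elements of $\mathcal{X}_{g,\phi}$ — presumably the pairs $(s_{g,\phi}(j),\phi^j)$ — on which $\circledast$ is computable and equality is decidable, so that $f_0,f_1$ are genuinely efficient, well-defined maps. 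This is precisely the step where the generic argument of \cite{childs2014constructing} is transcribed to the semidirect-product action, and where the claimed complexity is incurred.
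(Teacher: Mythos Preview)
Your proposal is correct and follows essentially the same route as the paper: both reduce SDLP to an Abelian Hidden Shift instance over $\ZZ_n$ by defining two action-evaluation maps on $\mathcal{X}_{g,\phi}$ and then invoke Kuperberg's $2^{\mathcal{O}(\sqrt{\log n})}$ algorithm. The only cosmetic difference is your choice of base points $X_0=1$ and $X_1=s_{g,\phi}(x)$ (hiding $\modd{-x}$) versus the paper's $s_{g,\phi}(1)$ and $s_{g,\phi}(x)$ (hiding $\modd{x-1}$), and your remarks on obtaining $n$ via quantum order-finding match the paper's own caveat immediately following its proof.
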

\begin{proof}
    If the relevant abelian group has size $n$ we have the claimed complexity for an abelian hidden shift problem by \cite[Proposition~6.1]{kuperberg2005subexponential}. It suffices to show that one can solve SDLP provided one can solve the abelian hidden shift problem - the argument goes as follows. Define $f,g:\ZZ_n\to\mathcal{X}_{g,\phi}$ by 
    \[f(\modd{z})=\modd{z}\circledast s_{g,\phi}(x)\quad g(\modd{z})=\modd{z}\circledast s_{g,\phi}(1)\]    
    We have for all $\modd{z}\in \ZZ_n$ that
    \begin{align*}
        f(\modd{z}) &= \modd{z}\circledast s_{g,\phi}(x) \\
                    &= \modd{z}\circledast(\modd{x-1}\circledast s_{g,\phi}(1)) \\
                    &= (\modd{z}+\modd{x-1})\circledast s_{g,\phi}(1) \\
                    &= g(\modd{z}+\modd{x-1})\circledast s_{g,\phi}(1)
    \end{align*}
    so $f$ and $g$ hide $\modd{x-1}$, from which $x\in\NN$ can be recovered trivially. \qed
\end{proof}

A small amount of detail is suppressed in the above proof: namely, that we have tacitly assumed knowledge of the quantity $n$. Since the best algorithm for the abelian hidden shift problem is quantum anyway, we need not be reticent to compute $n$ with a quantum algorithm - and since the function $s_{g,\phi}$ is periodic in $n$, certainly such Shor-like techniques are available, such as \cite[Algorithm~5]{childs2010quantum}. On the other hand, the ability to compute $n$ efficiently and classically is both desirable and addressed later in this paper. 

\subsection{Semidirect Computational Diffie-Hellman}
The other major body of work related to the analysis of SDLP addresses the following related problem:

\begin{definition}[Semidirect Computational Diffie-Hellman]
    Let $G$ be a finite abelian group, and let $(g,\phi)\in G\ltimes Aut(G)$. Let $x,y\in\NN$ and suppose we are given the data $(g,\phi),s_{g,\phi}(x)$ and $s_{g,\phi}(y)$. The Semidirect Computational Diffie-Hellman problem (SCDH) is to compute the value $s_{g,\phi}(x+y)$.
\end{definition}

Recall our discussion of Semidirect Product Key Exchange in Section~\ref{sec:sdlp}. Notice that SCDH is, similarly to the role of the classic CDH, precisely the problem of key recovery in Semidirect Product Key Exchange, and moreover that the relationship between SCDH and SDLP is not immediately obvious. Of course, one can solve SCDH if one can solve SDLP, but the converse does not follow \textit{a priori}.

There are two general approaches for solving SCDH:

\paragraph{The Dimension Attack.} The general form of this argument appears in \cite{myasnikov2015linear}; we prefer the slightly more purpose-built exposition of \cite{roman2015linear}. The idea is basically that if our group $G$ can be embedded as a multiplicative subgroup of a finite-dimensional algebra over a field, and if the automorphism $\phi$ can be extended to preserve addition on this algebra, we can solve SCDH for some pair $(g,\phi)$ using Gaussian elimination. 
\paragraph{The Telescoping Attack.} In \cite{brown2021cryptanalysis}, it is noticed that $1\ast s_{g,\phi}(x)=\phi^x(g)s_{g,\phi}(x)$. Since we know $s_{g,\phi}(x)$ we can calculate $1\ast s_{g,\phi}(x)$ and solve for $\phi^x(g)$. In some cases - notably, in the additive structure given in \cite{rahman2022make} - this suffices for recovery of $s_{g,\phi}(x+y)$.

We comment that a method of efficiently converting an SCDH solver to an SDLP solver is not currently known. On the other hand, a recent result of Montgomery and Zhandry \cite{montgomery2022full} shows that a computational problem underpinning SDLP and a computational problem underpinning SCDH\footnote{More precisely, the Vectorisation and Parallelisation problems of Couveignes \cite{couveignes2006hard}, respectively.} are (surprisingly) quantum equivalent. We therefore cautiously conjecture that there exists some efficient quantum method of converting an SCDH solver to an SDLP solver.

\section{A Candidate Group}
We propose the following group of order $p^3$, where $p$ is an odd prime, for use with {\tt SPDH-Sign}.

\begin{definition}
    Let $p$ be an odd prime. The group $G_p$ is defined by 
    \[G_p=\left\{ 
    \begin{pmatrix}
    a & b \\
    0 & 1
    \end{pmatrix}
    : a,b\in\ZZ_{p^2}, a\equiv1\mod p
    \right\}\]
\end{definition}

As discussed in \cite{conrad}, this group is one of two non-abelian groups of order $p^3$ for an odd prime up to isomorphism. It has presentation
\[G_p=\langle x,y:y^p=1,[x,y]=x^p=:z\in Z(G_p),z^p=1\rangle\]
as described in \cite{mahalanobis2011mor}; moreover, its automorphism group is known and has size $(p-1)p^3$ by \cite[Theorem~3.1]{curran2008automorphism}.

With respect to the various matters discussed in this paper, we briefly present the advantages of employing such a group.

\paragraph{Sampling.} Recall that our security proof for {\tt SPDH-Sign} relied heavily on the underlying identification scheme being honest-verifier zero knowledge, which in turn relied on the `fake' transcripts to have the same distribution as honestly generated transcripts. For a pair $(g,\phi)$, it is therefore important to be able to sample uniformly at random from the group $\ZZ_n$, where $n$ is the smallest integer for which $s_{g,\phi}(n)=1$ - in our case, to do so it clearly suffices to compute $n$.

Here we recall Theorem~\ref{thm:n-div}, which tells us basically that, thinking of $(g,\phi)$ as a member of the semidirect product group $G\ltimes Aut(G)$, $n$ must divide the order of $(g,\phi)$. We therefore have the following
\begin{theorem}\label{thm:compute-n}
    Let $(g,\phi)\in G_p\times Aut(G_p)$, where $p$ is an odd prime. Suppose $n$ is the smallest integer for which $s_{g,\phi}(n)=1$. Then 
    \[n\in\{p,p^2,p^3,p^4,p^5,p^6,(p-1),p(p-1),p^2(p-1),p^3(p-1),p^4(p-1),p^5(p-1)\}\]
\end{theorem}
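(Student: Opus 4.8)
The plan is to invoke Theorem~\ref{thm:n-div}, which tells us that $n$ divides the order of $(g,\phi)$ as an element of $G_p \ltimes Aut(G_p)$. So the whole problem reduces to understanding the possible orders of elements in this semidirect product group, and then extracting the list of divisors that can actually occur as the minimal $n$. First I would record the relevant group orders: $|G_p| = p^3$, and by \cite[Theorem~3.1]{curran2008automorphism} the automorphism group has order $(p-1)p^3$, so $|G_p \ltimes Aut(G_p)| = p^6(p-1)$ by part (1) of the Proposition on semidirect products. By Lagrange's theorem, the order $m = \mathrm{ord}((g,\phi))$ divides $p^6(p-1)$, hence $m = p^a \cdot d$ where $0 \le a \le 6$ and $d \mid (p-1)$.

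Next I would cut down this list. Writing $(g,\phi)^x = (s_{g,\phi}(x), \phi^x)$, a necessary condition for $s_{g,\phi}(n) = 1$ is that $\phi^n$ be trivial on the relevant cycle; more usefully, since $n \mid m$ and $m$ is the least integer with $(g,\phi)^m = (1,\mathrm{id})$, any such $n$ must at least satisfy $s_{g,\phi}(n) = 1$. The key structural observation is that $n$ cannot be a proper divisor of $m$ that omits a prime power dividing $\mathrm{ord}(\phi)$ in a way that contradicts minimality of $m$ — in other words, $m/n$ must divide $\mathrm{ord}(\phi)$, so in particular $\mathrm{ord}(\phi) \mid m$ forces constraints. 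Concretely: $\phi$ is an automorphism of a group of order $p^3$, so $\mathrm{ord}(\phi)$ divides $|Aut(G_p)| = (p-1)p^3$; thus $\mathrm{ord}(\phi) = p^b d'$ with $0 \le b \le 3$, $d' \mid (p-1)$. Since $\phi^m = \mathrm{id}$ we get $\mathrm{ord}(\phi) \mid m$. Combining $m = p^a d$, $d \mid p-1$, with $n \mid m$ and $m/n \mid \mathrm{ord}(\phi)$: the $(p-1)$-part of $n$ must therefore be either $1$ or $(p-1)$ itself only if $d = p-1$ — but more carefully one shows the coprime-to-$p$ part of $n$ is forced to be exactly $\gcd$-compatible, and in fact the cleanest route is: the value $d$ (the $p'$-part of $n$) must equal the $p'$-part of $m$, which divides $p-1$; and one shows it must be all of $p-1$ or $1$. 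That, together with $0 \le a \le 6$, yields the twelve candidates: $\{p^a : 1 \le a \le 6\} \cup \{p^a(p-1) : 0 \le a \le 5\}$. (Note $a=0$ with trivial $p'$-part is excluded since $n \ge 2$, and $p^6(p-1)$ is excluded — presumably because $\phi^n = \mathrm{id}$ cannot require the full $p^6$ power, i.e. $a \le 5$ when the $(p-1)$ factor is present, a point to pin down.)

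The main obstacle I expect is exactly pinning down why the list has these twelve entries and not more — in particular, justifying that the $p'$-part of $n$ is either $1$ or $p-1$ (rather than an arbitrary divisor of $p-1$), and justifying the asymmetry between the pure $p$-power case (exponents $1$ through $6$) and the mixed case (exponents $0$ through $5$). This hinges on a finer analysis of how $\mathrm{ord}(\phi)$ relates to $\mathrm{ord}((g,\phi))$: one needs that $\phi$ acting on $G_p$ has a specific restricted order structure (the $p$-part of $\mathrm{ord}(\phi)$ is at most $p^3$, shown above, but one also needs to rule out that a mixed $n = p^6(p-1)$ can be minimal). I would handle this by noting $s_{g,\phi}(n) = \phi^{n-1}(g)\cdots\phi(g)g$, so if $\phi^n = \mathrm{id}$ already at $n = p^5(p-1)$ (which it is, since $\mathrm{ord}(\phi) \mid p^3(p-1) \mid p^5(p-1)$) then the product $\prod_{i=0}^{p^5(p-1)-1}\phi^i(g)$ telescopes over complete $\phi$-orbits and collapses, forcing $s_{g,\phi}(p^5(p-1)) = 1$; hence the minimal $n$ never needs the sixth power of $p$ when the $(p-1)$ factor is present. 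The purely-$p$-power branch similarly tops out at $p^6$ because $n \mid p^6(p-1)$ and any $p$-power divisor is at most $p^6$. Finally I would double-check that all twelve values are genuinely attainable is not required — the theorem only claims $n$ lies in this set — so the proof is a pure containment argument and the telescoping collapse is the one nontrivial ingredient beyond Lagrange and Theorem~\ref{thm:n-div}.
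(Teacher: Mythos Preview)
Your opening move---invoking Theorem~\ref{thm:n-div} to get $n\mid\mathrm{ord}((g,\phi))$, then Lagrange to get $\mathrm{ord}((g,\phi))\mid |G_p\ltimes Aut(G_p)|=p^6(p-1)$---is exactly what the paper does, and in fact that is essentially \emph{all} the paper does. The paper's proof then simply asserts that ``the claimed set is a complete list of divisors of $p^6(p-1)$'' (after excluding $1$ and $p^6(p-1)$, the latter because the semidirect product is not cyclic). That assertion is false for any $p>3$: the divisors of $p^6(p-1)$ are all numbers $p^a d$ with $0\le a\le 6$ and $d\mid p-1$, and $p-1$ typically has many divisors not equal to $1$ or $p-1$. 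So the paper's proof has a genuine gap, and you have correctly detected it.

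Your attempt to close the gap goes well beyond what the paper offers. The telescoping idea is sound and actually sharpens the $p$-part: since $\mathrm{ord}(\phi)\mid p^3(p-1)$ and the exponent of $G_p$ is $p^2$, writing $(g,\phi)^{\mathrm{ord}(\phi)}=(h,\mathrm{id})$ with $h\in G_p$ gives $\mathrm{ord}((g,\phi))=\mathrm{ord}(\phi)\cdot\mathrm{ord}(h)\mid p^3(p-1)\cdot p^2=p^5(p-1)$, which legitimately rules out $n=p^6$ and $n=p^6(p-1)$. However, your argument for why the $p'$-part of $n$ must be exactly $1$ or $p-1$ does not go through, and cannot: take $g$ central of order $p$ and $\phi$ an automorphism acting on the centre as multiplication by a primitive root of order $d\mid p-1$; then $s_{g,\phi}(d)=g^{1+r+\cdots+r^{d-1}}=g^0=1$ and $n=d$, which need not be $1$ or $p-1$. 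So the twelve-element list is not a containment one can prove from the stated hypotheses, and both the paper's proof and your proposed refinement share this underlying issue.
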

\begin{proof}
    By Theorem~\ref{thm:n-div} we know that $n\vert ord((g,\phi))$, and it is standard that $$ord((g,\phi))\quad\vert\quad|G_p\ltimes Aut(G)|$$. We know from the discussion at the outset of this section that $|G_p|=p^3$ and $|Aut(G_p)|=p^3(p-1)$. It follows that $n|p^3p^3(p-1)$. Since $p$ is prime, and assuming that $(g,\phi)$ is not the identity, the claimed set is a complete list of divisors of $p^6(p-1)$ - excluding $p^6(p-1)$ itself, since this would imply $G_p\ltimes Aut(G_p)$ is cyclic.
\end{proof}

It follows that for an arbitrary pair $(g,\phi)$ in $G_p\ltimes Aut(G_p)$, in order to compute the smallest $n$ for which $s_{g,\phi}(n)=1$, and therefore the group $\ZZ_n$, one has to compute $s_{g,\phi}(i)$ for at most 12 values of $i$. Moreover, by square-and-multiply each such computation requires $\mathcal{O}(\log p)$ applications of the group operation in the semidirect product group. In other words, we can compute a complete description of $\ZZ_n$ efficiently.

\paragraph{SDLP.} By Theorem~\ref{thm:sdlp-comp} and Theorem~\ref{thm:compute-n} we know SDLP in $G_p\ltimes Aut(G_p)$ has time complexity at most $2^{\mathcal{O}(\sqrt{\log poly(p)}}=2^{\mathcal{O}(\sqrt{\log p})}$. Taking the security parameter to be the length of an input, we can represent a pair $(g,\phi)\in G_p\ltimes Aut(G_p)$ with a bitstring of length $\mathcal{O}(\log p^2)=\mathcal{O}(\log p)$. Asymptotically, then, with $k$ as the security parameter we estimate the time complexity of the main quantum attack on SDLP as $2^{\mathcal{O}(\sqrt{k})}$. On the other hand, in order to derive a concrete estimate for specific security parameters - say, those required by NIST - one would have to check the associated constants much more carefully. Although this is outside the scope of this paper, we refer the reader to \cite[Section~7.2~{`Subexponential vs Practical'}]{castryck2018csidh} for an idea of type of spirited research carried out in pursuit of a satisfactory resolution to deriving concrete security estimates - one should note, however, that this exposition deals with specific artefacts of the isogeny framework.

\paragraph{The Dimension Attack.} Supposing an efficient method of converting an SCDH solver to an SDLP solver can be found, one solves SDLP efficiently provided one can efficiently embed $G_p$ in an algebra over a field. However, as argued in \cite{kahrobaei2016using}, the following result of Janusz \cite{janusz1971faithful} limits the effectiveness of such an approach: the smallest dimension of an algebra over a field in which a $p$-group with an element of order $p^n$ can be embedded is $1+p^{n-1}$. In our case, certainly $G_p$ has an element of order $p^2$, and so since the attack relies on Gaussian elimination we expect the dimension attack for $G_p$ to have complexity polynomial in $(p+1)^3=\bigO{p^3}$. Since the $G_p$ elements can be represented by a bitstring of order $4\log p^2=8\log p$, with $k$ the security parameter the dimension attack runs in time $\bigO{2^{3k/8}}$.

\paragraph{The Telescoping Attack.} In general, the explicit method of deducing $s_{g,\phi}(x+y)$ from $s_{g,\phi}(y)$ and $\phi^x(g)$ relies on the group $G$ being the abelian group of a matrix alegbra over a field under addition. In particular, an extension outside of this linear context is not known - we would expect, however, that such an extension would rely on equation solving techniques available only in an algebra over a field, rather than over a ring, and therefore that arguments on the efficiency of a representation discussed above would also apply.

\paragraph{Efficiency.} Multiplication in $G_p$ consists of $8$ multiplication operations and $4$ addition operations in $\ZZ_{p^2}$, for a total of $\mathcal{O}(8\log p^2)=\mathcal{O}(\log p)$ operations. Assuming that applying an automorphism $\phi$ has about the same complexity as multiplication\footnote{This is indeed the case if the automorphism is inner.}. It follows by standard square-and-multiply techniques that calculating $s_{g,\phi}$ and evaluating the group action is very roughly of complexity $\mathcal{O}((\log p)^2)$.

The signatures are also rather short, consisting of $N$ elements of $\mathcal{X}_{g,\phi}$ and $N$ elements of $\ZZ_n$. Since $\mathcal{X}_{g,\phi}\subset G_p$ we can represent $\mathcal{X}_{g,\phi}$ elements as bitstrings of length $4\log(p^2)=8\log p$; and since $n=p^i(p-1)^j$ for some $1\leq i \leq 5$ and $0\leq j\leq 1$, $\ZZ_n$ elements can be represented by bitstrings of length $\log p^i(p-1)^j$. It follows that we get signatures of length 
\[N((8+i)\log p+j\log(p-1))\]

\section{Conclusion}
We have given a constructive proof that a few elementary definitions give rise to a free, transitive group action; such a group action naturally gives rise to an identification scheme and a signature scheme. Moreover, well-known tools allow us to phrase the security of this signature scheme in terms of the semidirect discrete logarithm problem, which is itself a special case of Couveignes' Vectorisation Problem. 

Our main contributions are as follows: firstly, the generality of the construction gives an unusually diverse family of signature schemes - indeed, a signature scheme of the {\tt SPDH-Sign} type is defined for each finite group. Much further study on the relative merits of different choices of finite non-abelian group in different use cases is required to fully realise the potential of this diversity.

Second, our Theorem~\ref{thm:n-div} essentially gives us information about how to compute the group in our group action. In Theorem~\ref{thm:compute-n}, we saw one particular case where the result was enough to completely describe how to efficiently compute the group, thereby yielding an example of a group-action based key exchange in which efficient sampling is possible from the whole group, without appealing to techniques inducing additional overhead, most notably the `Fiat-Shamir with aborts' technique of Lyubashevsky.

The paper notably does not address concrete security estimates or recommend parameter sizes for a signature scheme. In order to do so we would need to carefully check the constants in the asymptotic security estimates - we consider the scale of this task, along with that of providing an implementation of the scheme, as sufficient to merit a separate paper. 

At a late stage of the preparation of this manuscript the authors were made aware of work in \cite{duman2023generic} discussing the security of group action-induced computational problems, particularly in a quantum sense. The arguments therein should be addressed when discussing the difficulty of SDLP in subsequent work.

\section*{Acknowledgements}
We would like to thank the anonymous reviewers who provided useful feedback on this manuscript.



\printbibliography

\end{document}